\newcommand{\bref}[1]{(B\ref{#1})}
\newcommand{\ignore}[1]{}
\newcommand{\defparproblemu}[4]{
  \vspace{1mm}
\noindent\fbox{
  \begin{minipage}{0.97\textwidth}
  \begin{tabular*}{0.97\textwidth}{@{\extracolsep{\fill}}lr} #1 & {\bf{Parameter:}} #3 \\ \end{tabular*}
  {\bf{Input:}} #2  \\
  {\bf{Question:}} #4
  \end{minipage}
  }
  \vspace{1mm}
}
\newcommand{\nbname}{{\sc{Nonblocker}}}
\newcommand{\ch}{{\rm ch}}
\newcommand{\sch}{\widehat{\ch}}
\renewcommand{\ae}{\AE{}}
\newcommand{\case}[1]{\noindent{\bf CASE #1}}
\renewcommand{\P}{\mathbf{P}}
\newcommand{\NP}{\mathbf{NP}}
\newtheorem {theorem}{Theorem}
\newtheorem {observation}{Observation}
\newtheorem {inv}{Invariant}
\newtheorem {lemma}[theorem]{Lemma}
\newcommand{\myparagraph}[1]{\vskip 2mm \noindent{\bf #1\ }}
\begin{document}

\title{Nonblocker in $H$-minor free graphs:\\ kernelization meets discharging\thanks{Work supported by the National Science Centre (grant N206 567140). }}

\date{}

\author{
  \L{}ukasz Kowalik\\
  Institute of Informatics, University of Warsaw, Poland\\
  \texttt{kowalik@mimuw.edu.pl}}

\maketitle

\begin{abstract}
Perhaps the best known kernelization result is the kernel of size $335k$ for the {\sc Planar Dominating Set} problem by Alber et al.~\cite{afn:planar-domset}, later improved to $67k$ by Chen et al.~\cite{cfkx:duality-and-vertex}. This result means roughly, that the problem of finding the smallest dominating set in a planar graph is easy when the optimal solution is small. 
On the other hand, it is known that {\sc Planar Dominating Set} parameterized by $k'=|V|-k$ (also known as {\sc Planar Nonblocker}) has a kernel of size $2k'$. This means that {\sc Planar Dominating Set} is easy when the optimal solution is very large. We improve the kernel for {\sc Planar Nonblocker} to $\frac{7}{4}k'$. This also implies that {\sc Planar Dominating Set} has no kernel of size at most $(\frac{7}{3}-\epsilon)k$, for any $\epsilon>0$, unless $\P=\NP$. This improves the previous lower bound of $(2-\epsilon)k$ of~\cite{cfkx:duality-and-vertex}. Both of these results immediately generalize to $H$-minor free graphs (without changing the constants).
  
In our proof of the bound on the kernel size we use a variant of the {\em discharging method} (used e.g.\ in the proof of the four color theorem). We give some arguments that this method is natural in the context of kernelization and we hope it will be applied to get improved kernel size bounds for other problems as well.

As a by-product we show a result which might be of independent interest: every $n$-vertex graph with no isolated vertices and such that every pair of degree 1 vertices is at distance at least 5 and every pair of degree 2 vertices is at distance at least 2 has a dominating set of size at most $\frac{3}7n$.
\end{abstract}

\section{Introduction}

For many NP-complete problems there are kernelization algorithms, i.e.\ efficient algorithms which replace the input instance with an equivalent, but often much smaller one. More precisely, a {\em kernelization algorithm} takes an instance $I$ of size $n$ and a parameter $k\in\mathbb{N}$, and after time polynomial in $n$ it outputs an instance $I'$ (called a {\em kernel}) with a parameter $k'$ such that $I$ is a yes-instance iff $I'$ is a yes instance, $k'\le k$, and $|I'|\le f(k)$ for some function $f$ depending only on $k$. The most desired case is when the function $f$ is polynomial, or even linear (then we say that the problem admits a polynomial or linear kernel). In such case, when the parameter $k$ is relatively small, the input instance, possibly very large, is ``reduced'' to a small one (preferably of size polynomial, or even linear in $k$). 

\myparagraph{Kernelization and discharging.}
 A typical kernelization algorithm processes an instance of an NP-complete graph problem in the following way: roughly, as long as possible it finds a {\em reducible configuration} in the graph, i.e.\ a structure which can be replaced by a smaller structure so that the original graph is a yes-instance iff so is the new graph. Then it is shown that the kernel, i.e.\ a graph which contains no reducible configuration is small.
 
 Many  results in graph theory, including the four colour theorem as the best known example, are proven in the following way.
 Assume we are to show that graphs in some family (e.g.\ planar graphs) have some property (e.g.\ are 4-colorable).
 Then we specify a set of {\em reducible configurations}, i.e.\ structures which can be replaced by smaller structures so that the original graph has the desired property iff the new graph also has the property. Now, if a graph in our family contains such a configuration, we can proceed by induction.
 Otherwise, i.e.\ if a graph contains no reducible configuration we derive a contradiction. In the known proofs of the four color theorem~\cite{ah,rsst} (and many other results, e.g.~\cite{borodin,skreko}) this second part is realized by so-called {\em discharging method}.
 
 Since the two situations described above are so similar it is natural to ask whether the discharging method can be used to bound the size of a kernel.
 In this paper we present a result of that kind. Discharging used in the cited works for planar graphs is based on Euler's formula. Here we do not use the Euler's formula but the common theme is the same: using discharging we show that the graph under consideration cannot be ``hard everywhere'', i.e.\ even if it has some parts which are hard to dominate, then it has some parts which are easy, and on the average we get the desired bound. 
 A similar ``amortized analysis'' has been recently used in the context of kernelization by Kanj and Zhang~\cite{kanj-wg11}.

\myparagraph{Small kernels for planar graph problems.}
Perhaps the best known kernelization result is the kernel of size $335k$ for the {\sc Planar Dominating Set} problem by Alber et al.~\cite{afn:planar-domset}. This result opened a new research direction, which culminated in general results which show linear/polynomial kernels for large classes of problems in various graph families that contain planar graphs, e.g.\ bounded genus graphs or even $H$-minor free graphs~\cite{fomin:bidim-kernels,meta-kernelization}. There are several motivations for restricting the input to planar or $H$-minor free graphs. 
First, for many problems (including {\sc Dominating Set}) in general graphs no polynomial kernels exist (under appropriate complexity assumptions). 
Second, even if for some problem there is a polynomial kernel for general graphs, when executed on a planar graph it usually outputs a non-planar kernel, and then we do not want to use it because when we want to {\em solve} the kernel, we often prefer to use specialized (and faster) algorithms for planar graphs.
Finally, it is often the case that for the special case of planar graphs there is a specialized kernelization algorithm which outputs a smaller kernel than that for the general setting.
Indeed, as it was shown by Fomin et al.~\cite{fomin:bidim-kernels} many natural graph problems have a linear kernel for planar graphs. Knowing this, further research is done to reduce the leading constant in the linear function describing the kernel size. For example, the kernel of Alber et al. was later improved to $67k$ by Chen et al.~\cite{cfkx:duality-and-vertex}; the first linear kernel for {\sc Planar Connected Vertex Cover} was that of size $14k$ due to Guo and Niedermeier~\cite{guon:planarkernels} and it was then reduced to $4k$ by Wang et al.~\cite{mfcs} and even to $\frac{11}{3}k$ by Kowalik et al.~\cite{moja-cvc-11/3}. Observe that these constants may be crucial: since we deal with NP-complete problems, in order to find an exact solution in the reduced instance, most likely we need exponential time (or at least superpolynomial, because for planar graphs $2^{O(\sqrt{k})}$-time algorithms are often possible), and these constants appear in the exponents.

\myparagraph{Our Results.}
In this paper we study kernelization of the following problem restricted to planar graphs, or more generally to $H$-minor free graphs:

\defparproblemu{\nbname}{an $n$-vertex graph $G=(V,E)$ and an integer $k\in \mathbb{N}$}{$k$}{Is there a dominating set of size $n-k$?}

This problem can be also defined as {\sc Dominating Set} parameterized by $n-k$, in other words {\sc Nonblocker} is the parametric dual of {\sc Dominating Set} (see~\cite{cfkx:duality-and-vertex} for the definition of the parametric dual). {\sc Nonblocker} has a trivial $2k$-kernel for general graphs (and also for any reasonable graph class) since every $n$-vertex graph with no isolated vertices has a dominating set of size at most $n/2$ (consider a spanning forest of $G$, $2$-color it and choose the larger color class). This was improved to a $(\frac{5}{3}k+3)$-kernel by Dehne et al.~\cite{dehne-nonblocker}. Their kernelization algorithm applies the so-called catalytic rule, which identifies the neighbors of two degree 1 vertices, then removes one of the degree 1 vertices and decreases $k$ by 1 (when there is only one degree 1 vertex left, they use a classic result of McCuaig and Shepard~\cite{mccuaig} which states that any $n$-vertex graph of minimum degree 2 has a dominating set of size at most $\frac{2}5n$, for $n$ large enough). As we see the catalytic rule preserves neither planarity nor excluded minors. It follows that the best kernel for {\sc Planar Nonblocker} to date is still the trivial $2k$. A natural question arises: can this bound be improved? In this work we answer this question affirmatively: we present a $\frac{7}{4}k$-kernel for {\sc Planar Nonblocker}. Since in our reduction rules we only remove edges/vertices or contract edges our result immediately generalizes to $H$-minor-free graphs (with the same constant in the kernel size, which is a rather rare phenomenon in the field).

An important motivation for studying parametric duals, discovered by Chen et al.~\cite{cfkx:duality-and-vertex}, is that if the dual problem admits a kernel of size at most $\alpha k$, then the original problem has no kernel of size at most $(\alpha/(\alpha-1)-\epsilon)k$, for any $\epsilon>0$, unless P$=$NP. Hence, our kernel implies that {\sc Planar Dominating Set} has no kernel of size at most $(\frac{7}{3}-\epsilon)k$ for any $\epsilon>0$ (and the same holds for {\sc Dominating Set} restricted to any graph family closed under taking minors). This is the first improvement over the $(2-\epsilon)k$ lower bound of Chen et al.~\cite{cfkx:duality-and-vertex}.

We note here that although using the approach of Dehne et al.~\cite{dehne-nonblocker} one can get a $(\frac{5}{3}k+3)$-kernel for the {\em annotated version} of {\sc Planar Nonblocker} (where the instance is extended by a subset of vertices that do not need to be dominated), it is unclear how to use this result to get an improved lower bound for the kernel size of {\sc Planar Dominating Set}.

To bound the size of a kernel means just to give a lower or upper bound for the value of some graph invariant (e.g.\ the domination number) in a restricted class of graphs. Sometimes it is enough to apply a known combinatorial result, like the lower bound for the domination number of McCuaig and Shepard~\cite{mccuaig} for graphs of minimum degree 2, used by Dehne et al.~\cite{dehne-nonblocker}. There is also a better bound of $\frac{3}8n$ for  graphs of minimum degree 3 due to Reed~\cite{bruce} (later improved to $\frac{4}{11}n$ by Kostochka and Stodolsky~\cite{kostochka}). However in our kernel there still can be an unbounded number of vertices of degree 1 and 2, though there are some restrictions on them, so a tailor-made bound has to be shown. Applying the approach of Reed we show that every $n$-vertex graph with no isolated vertices and such that every pair of degree 1 vertices is at distance at least 5 and every pair of degree 2 vertices is at distance at least 2 has a dominating set of size at most $\frac{3}7n$. We suppose that this result may be of independent interest.

\myparagraph{Terminology and notation.}
We use standard graph theory terminology, see e.g.~\cite{diestel}.
By $N_G(v)$ we denote the set of neighbors of vertex $v$, and for a subset of vertices $X\subseteq V(G)$, we denote $N_G(X)=\bigcup_{x\in X}N_G(x)\setminus X$. 
The subscripts are omitted when it is clear which graph we refer to.
$G[S]$ denotes the subgraph of graph $G$ induced by a set of vertices $S$. By a $d$-vertex we mean a vertex of degree $d$. A $d$-neighbor is a neighbor of degree $d$.
We also use the Iverson bracket: $[\alpha]$ equals $1$ if the condition $\alpha$ holds and $0$ otherwise.

\section{The kernelization algorithm}

We say that a reduction rule for parameterized graph problem $P$ is {\em correct} when for every instance $(G,k)$ of $P$ it returns an instance $(G',k')$ such that:
\begin{enumerate}[a)]
\item $(G',k')$ is an instance of $P$,
\item $(G,k)$ is a yes-instance of $P$ iff $(G',k')$ is a yes-instance of $P$, and
\item $k'\le k$.
\end{enumerate}

We present six simple reduction rules below. It will be easier for us to formulate and analyze the rules for {\sc Dominating Set}. We will then convert them to rules for {\sc Nonblocker}.

\begin{enumerate}[Rule R1. ]
\item \label{r:iso} If there is an isolated vertex $v$, then remove $v$ and decrease $k$ by 1.
\item \label{r:iso-edge} If there is an isolated edge $vw$, then remove both $v$ and $w$ and decrease $k$ by 1.
\item \label{r:1-vtx} If a vertex $v$ has more than one 1-neighbors, then remove all these neighbors
except for one.
\item \label{r:adj-2} Assume there is a path $P=abcd$ with $\deg(b)=\deg(c)=2$. 
If $a\ne d$, then contract $P$ into a single vertex $v$ and decrease $k$ by one.
If $a=d$, then contract the edge $bc$.
\item \label{r:four} If there is a path $abcd$ with $\deg(a)=\deg(d)=1$, then contract edge $bc$ and decrease $k$ by one.
\item \label{r:heart} If there is a path $abcde$ with $\deg(a)=\deg(e)=1$, then remove edge $bc$.
\label{r:last}
\end{enumerate}

\begin{lemma}
\label{lem:reduce}
Rules R1-R6 are correct for {\sc Dominating Set} restricted to any minor-closed graph class.
\end{lemma}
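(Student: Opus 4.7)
The plan is to verify the three correctness conditions (a)--(c) for each of the six rules. Conditions (a) and (c) are essentially free: every rule only deletes vertices, deletes edges, or contracts edges, so $G'$ is a minor of $G$ and hence stays in any minor-closed class, and by inspection $k'\in\{k-1,k\}$ in every rule. Condition (b), namely that $G$ has a dominating set of size at most $k$ iff $G'$ has one of size at most $k'$, is where the work sits, and a single exchange observation drives nearly every case: if $u$ is a pendant with unique neighbour $w$ and $D$ is any dominating set with $u\in D$, then $(D\setminus\{u\})\cup\{w\}$ is also a dominating set of the same size.

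I would dispatch R1, R2, R3, R5, and R6 first. For R1 an isolated vertex lies in every dominating set, giving $\gamma(G)=\gamma(G-v)+1$. For R2 at least one endpoint of an isolated edge is needed and one suffices, giving $\gamma(G)=\gamma(G-\{v,w\})+1$. For R3 the exchange lemma lets us pick a minimum dominating set containing $v$, after which the surplus pendants $u_2,\dots,u_m$ are redundant. For R6 the lemma applied to the pendants $a$ and $e$ lets us pick a minimum dominating set containing $b$ and $d$; then the edge $bc$ is unused for domination (since $c$ is already covered by $d$), so deleting it leaves $D$ dominating $G'$, while the converse is trivial because $G$ has strictly more edges than $G'$. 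For R5 the lemma forces both $b$ and $c$ into a minimum dominating set, which together collapse into the contracted vertex $b'$, giving $\gamma(G)=\gamma(G')+1$; the converse lifts $D'$ either by replacing $b'\in D'$ with $b$, or, if $b'\notin D'$, by using the then forced $a\in D'$ and swapping $a\mapsto b$.

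The crux is R4. For Case 1 ($a\neq d$) I would do a case analysis on $|D\cap\{a,b,c,d\}|$, first observing that a singleton $\{a\}$ or $\{d\}$ is impossible because it leaves $c$ or $b$ undominated. When the intersection has size at least $2$, replacing all path vertices of $D$ by the contracted vertex $v$ saves at least one from the budget, and since $v$'s neighbourhood in $G'$ is exactly the external neighbours of $a$ and $d$, the resulting set still dominates $G'$. When the intersection is $\{b\}$ (symmetrically $\{c\}$), the opposite endpoint $d$ must be dominated by some external $x\in D$ which is a neighbour of $v$ in $G'$, so $D\setminus\{b\}$ already dominates $G'$. The converse handles $v\in D'$ by taking $D=(D'\setminus\{v\})\cup\{a,d\}$ and handles $v\notin D'$ by adding $b$ or $c$ depending on which of $a,d$ has an external dominator in $D'$. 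Case 2 ($a=d$) produces a triangle $abc$; after contracting $bc$ the merged vertex is a pendant of $a$, and the exchange lemma gives $\gamma(G)=\gamma(G')$.

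The main obstacle I expect is organising the case analysis of R4 Case 1, where one must correctly pair the missing path vertex on one side with an external dominator on the opposite side, in both directions of the equivalence. All remaining arguments are routine applications of the pendant-exchange lemma together with the observation that edges among non-path vertices are preserved by every rule, so that a dominator of a non-path vertex outside the path transfers verbatim between $G$ and $G'$.
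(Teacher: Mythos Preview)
Your overall plan coincides with the paper's: conditions (a) and (c) are dispatched by inspection, and (b) is verified rule by rule after using the pendant-exchange observation to normalise a minimum dominating set. Your treatment of R4 with $a\neq d$ is in fact more careful than the paper's own argument: when $v\notin S'$ the paper simply takes $S'\cup\{b\}$, which can leave $d$ undominated if the external dominator of $v$ happens to lie in $N_G(a)\setminus N_G(d)$; your rule of adding $b$ or $c$ according to which of $a,d$ has an external dominator in $D'$ is exactly the right repair.

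There is, however, a genuine (though easily fixed) gap in your converse for R5. When $b'\in D'$ you set $D=(D'\setminus\{b'\})\cup\{b\}$, but this need not dominate $d$: in $G$ the only neighbour of $d$ is $c$, and neither $c$ nor $d$ is forced into $D$, since in $G'$ the vertex $d$ was already covered by $b'\in D'$ and so $d\notin D'$ is possible. Concretely, if $G$ is the bare path $abcd$ then $D'=\{b'\}$ dominates $G'=ab'd$, yet $\{b\}$ does not dominate $G$; the same failure occurs for any neighbour of $c$ that is not adjacent to $b$. The fix uses precisely the spare unit of budget coming from $k'=k-1$: replace $b'$ by the pair $\{b,c\}$ rather than by $b$ alone, giving $|D|=|D'|+1\le k$ and a set that dominates everything $b'$ dominated in $G'$.
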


\begin{proof}
 The condition c) is clear, while a) follows from the fact that we only remove edges or vertices and contract edges.
 It suffices to verify b).
 Let $S$ be any minimum dominating set in $G$.
 For R1 b) follows from the fact that $S$ contains $v$.
 For R2 b) follows from the fact that $S$ contains exactly one endpoint of $vw$.
 For R3 b) follows from the fact that $S$ contains $v$ and does not contain any of its 1-neighbors.
 For R4 with $a=d$ b) follows from the fact that $S$ contains $a$ and does not contain any of $\{b,c\}$.
 For R5 b) follows from the fact that $S$ contains $\{b,c\}$.
 For R6 b) follows from the fact that $S$ contains $\{b,d\}$.
 
 Finally consider R4 for $a\ne d$. This is the only nontrivial rule. Let $G'$ be the graph obtained from $G$ after applying the rule.
 First assume $(G,k)$ is a yes-instance, i.e.\ $|S|\le k$. We will show that $G'$ has a dominating set of size at most $k-1$.
 If $|\{a,b,c,d\}\cap S| \le 1$ then $\{a,b,c,d\}\cap S\varsubsetneq\{b,c\}$ for otherwise $S$ is not dominating. Assume by symmetry $b \in S$, $c\not\in S$. 
 Then $d$ has a neighbor in $S$ distinct from $c$, so $S\setminus\{b\}$ is a dominating set of size $|S|-1\le k-1$ in $G'$.
 If $|\{a,b,c,d\}\cap S| \ge 2$ then $R=S\setminus\{b,c\}\cup\{a,d\}$ is another minimum dominating set in $G$.
 Then clearly $R\setminus\{a,d\}\cup\{v\}$ is a dominating set of size $|R|-1\le k-1$ in $G'$.
 Now assume $(G',k-1)$ is a yes-instance, i.e.\ it has a dominating set $S'$ of size at most $k-1$.
 If $v\in S'$, then $S' \setminus \{v\} \cup \{a,d\}$ is a dominating set of size at most $k$ in $G$.
 If $v\not\in S'$, then $S' \cup \{b\}$ is a dominating set of size at most $k$ in $G$.
 This finishes the proof.
 
\end{proof}

Now, every Rule R$i$ is converted to Rule R$i'$ as follows. Let $(G,\ell)$ be an instance of {\sc Nonblocker}. Put $k=|V(G)|-\ell$, apply R$i$ to $(G,k)$ and get $(G',k')$. Put $\ell'=|V(G')|-k'$ and return $(G',\ell')$.

\begin{lemma}
\label{lem:reduce'}
Rules R1$'$-R6$'$ are correct for {\sc Nonblocker} restricted to any minor-closed graph class.
\end{lemma}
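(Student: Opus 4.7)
The plan is to leverage the bijection between Nonblocker and Dominating Set instances induced by $k=|V(G)|-\ell$: $(G,\ell)$ is a yes-instance of \nbname{} iff $G$ has a dominating set of size at most $|V(G)|-\ell$, i.e.\ iff $(G,|V(G)|-\ell)$ is a yes-instance of {\sc Dominating Set}. By construction $k=|V(G)|-\ell$ and $\ell'=|V(G')|-k'$, so the same equivalence applies to $(G',\ell')$ and $(G',k')$. Combining these two equivalences with condition (b) of Lemma~\ref{lem:reduce} immediately yields condition (b) for each Rule~R$i'$.

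Condition (a) is essentially immediate: each of R1--R6 only removes vertices/edges or contracts edges, so $G'$ stays in the minor-closed class containing $G$, and $\ell'$ is a well-defined integer by construction.

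The main step is (c). Writing $n=|V(G)|$ and $n'=|V(G')|$, a one-line computation gives
\[
\ell'-\ell \;=\; (k-k')-(n-n'),
\]
so it suffices to verify, rule by rule, that the number of vertices removed is at least the amount by which $k$ decreases. This is routine bookkeeping: R1 removes one vertex and drops $k$ by one; R2 removes two vertices and drops $k$ by one; R3 removes at least one vertex without changing $k$; R4 with $a\ne d$ removes three vertices (a $4$-vertex path contracted to a single vertex) and drops $k$ by one, while R4 with $a=d$ removes one vertex without changing $k$; R5 removes one vertex (edge contraction) and drops $k$ by one; R6 removes no vertex and does not change $k$. In each case $n-n'\ge k-k'$, hence $\ell'\le\ell$.

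The only conceptual point to watch is the sign flip introduced by reparameterizing $k$ as $n-k$: rules that decrease $k$ must also delete \emph{enough} vertices, otherwise $\ell'\le\ell$ could fail. The bookkeeping above confirms the required inequality for every rule, which is the only mildly subtle step in the whole argument.
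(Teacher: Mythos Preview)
Your proof is correct and follows essentially the same approach as the paper: conditions (a) and (b) are inherited from Lemma~\ref{lem:reduce}, and (c) is verified rule by rule via the identity $\ell'-\ell=(k-k')-(n-n')$, which is just a reformulation of the paper's check that $|V(G)|-k$ does not increase. Your bookkeeping for R4 with $a\ne d$ (three vertices removed, $k$ decreased by one, so $\ell$ drops by two) is in fact more accurate than the paper's statement that this quantity ``does not change,'' but either way the required inequality $\ell'\le\ell$ holds.
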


\begin{proof}
 The conditions a) and b) follow from Lemma~\ref{lem:reduce}. In order to prove c) we need to verify that for every $i=1,\ldots,6$ rule R$i$ does not increase $|V(G)|-k$. Indeed, this value does not change in R1, R4 when $a\ne d$, R5 and R6, decreases by 1 in R2 and R4 when $a=d$, and decreases by the number of removed vertices in R3.
\end{proof}

We note here that by the Graph Minor Theorem any minor-closed graph class can be characterized by a finite set of forbidden minors, so in particular our rules are correct for $H$-minor-free graphs.

\begin{observation}
\label{obs:rules}
 If none of the reduction rules applies to an $n$-vertex graph $G$ then $G$ has no isolated vertices, every pair of 1-vertices is at distance at least 5 and every pair of 2-vertices is at distance at least 2.
\end{observation}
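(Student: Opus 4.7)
The statement is essentially a direct unpacking of what each of the six rules forbids, so my plan is a straightforward case analysis: for each of the three claims (no isolated vertex, 1-vertices at distance $\ge 5$, 2-vertices at distance $\ge 2$), I will show that violating it triggers one of R1--R6, contradicting the assumption that no rule applies. I will assume throughout that none of R1--R6 is applicable to $G$.

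The first claim is immediate: any isolated vertex triggers R1. For the claim about 2-vertices, suppose two 2-vertices $b$ and $c$ lie at distance $1$, i.e.\ $bc\in E(G)$. Because $\deg(b)=\deg(c)=2$, $b$ has a neighbor $a\ne c$ and $c$ has a neighbor $d\ne b$, producing a path $abcd$ to which R4 applies (the rule explicitly covers both the $a\ne d$ and $a=d$ cases). Hence every pair of 2-vertices is at distance $\ge 2$.

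For the claim about 1-vertices I check distances $1,2,3,4$ in turn. Two 1-vertices at distance $1$ form an isolated edge, so R2 applies. If two 1-vertices $u,v$ are at distance $2$, they share a common neighbor $w$; then $w$ has at least two 1-neighbors and R3 applies. If they are at distance $3$, the geodesic between them is a path $abcd$ with $\deg(a)=\deg(d)=1$, which is exactly the pattern of R5. Finally, distance $4$ gives a geodesic $abcde$ with $\deg(a)=\deg(e)=1$, matching R6. In all four cases a rule is applicable, contradicting our assumption, so every pair of 1-vertices must lie at distance $\ge 5$.

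The only mildly subtle point, and the one worth flagging, is to confirm that the relevant short paths really exist as \emph{induced} subpaths of the appropriate form: specifically, that in the degree-2/degree-2 case we can pick $a\ne c$ and $d\ne b$ (forced by $\deg(b)=\deg(c)=2$ together with $bc\in E$), and that in the degree-1 cases the shortest path between the two 1-vertices has no repeated vertices and its endpoints indeed have degree $1$ in $G$ (both are guaranteed because a geodesic is a simple path and the endpoint degrees are inherited from $G$, not from any induced subgraph). Once these minor sanity checks are noted, each case is a one-line verification, and the observation follows.
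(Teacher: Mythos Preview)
Your proof is correct and is precisely the intended argument; the paper states this as an observation without proof, and your case analysis (R1 for isolated vertices, R4 for adjacent 2-vertices, and R2/R3/R5/R6 for 1-vertices at distances $1,2,3,4$ respectively) is the natural unpacking. One tiny quibble: the rules do not require the paths $abcd$ or $abcde$ to be \emph{induced}, only that they be paths in $G$, so your parenthetical about ``induced subpaths'' is stronger than needed---but the geodesics you use are simple paths regardless, so nothing is affected.
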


The next section is devoted to the proof of the following theorem, which is the main technical contribution of this work.

\begin{theorem}
\label{thm:main}
 Every graph with no isolated vertices and such that every pair of 1-vertices is at distance at least 5 and every pair of 2-vertices is at distance at least 2 has a dominating set of size at most $\frac{3}7n$ and it can be found in polynomial time.
\end{theorem}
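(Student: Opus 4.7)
My plan is to adapt Reed's proof of the $\frac{3}{8}n$ bound for min-degree-$3$ graphs to the present setting, where low-degree vertices appear but are constrained by the distance conditions. The algorithm proceeds in three phases. \emph{Phase~1 (pendants).} For each $1$-vertex $u$, place its unique neighbor $v$ in $D$; because $1$-vertices are pairwise at distance $\geq 5$, these $v$'s are distinct and pairwise at distance $\geq 3$, so their closed neighborhoods are essentially disjoint. \emph{Phase~2 ($2$-vertex cleanup).} For each yet-undominated $2$-vertex, add one of its neighbors to $D$; these neighbors have degree $\geq 3$ since $2$-vertices are pairwise non-adjacent. \emph{Phase~3 (Reed cover).} The undominated portion now essentially has min-degree $\geq 3$; apply a Reed-style linear-forest/path-cover argument to dominate it. Finally trim $D$ to make it minimal. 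Every phase is polynomial.

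To bound $|D| \leq \frac{3}{7}n$, I would use discharging. Assign each vertex the initial charge $\mu(v) = [v \in D]$, so that the total charge equals $|D|$. Then design local rules of the form ``every $v \in D$ distributes a total of $\frac{4}{7}$ among selected vertices in $N[v] \setminus D$,'' with special provisions for $v$ adjacent to a pendant, for $v$ that is itself a $2$-vertex, and for $v$ adjacent to a $2$-vertex. The target is that after discharging every vertex carries charge at most $\frac{3}{7}$; since discharging preserves the sum, $|D| \leq \frac{3}{7}n$ follows. Reed's $\frac{3}{8}n$ bound on the residual min-degree-$3$ region gives this slack automatically there, so the real work is only local near the $1$- and $2$-vertices.

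The main obstacle is calibrating the discharging rules so the local inequality $\mu^\ast(v) \le \frac{3}{7}$ holds uniformly. The trickiest configurations are near pendants, where the forced dominator $v \in D$ may have only one private neighbor to absorb charge, and near $2$-vertices, whose domination inherently has ratio $\geq \tfrac{1}{3}$ and leaves little slack. The distance-$\geq 5$ condition on $1$-vertices is the key combinatorial ingredient: it forces a rich two-hop buffer around each pendant, giving enough non-$D$ vertices in the surroundings to share the excess charge; analogously the distance-$\geq 2$ condition on $2$-vertices prevents ``chains'' of $2$-vertices that would violate the bound. A secondary technical point is ensuring Phase~3 does not produce short Reed-cover components (paths on $\leq 6$ vertices) whose per-vertex ratio would exceed $\frac{3}{7}$; this should follow from standard augmentation arguments exploiting the min-degree-$3$ property of the residual graph, or alternatively by folding such defects into the global discharging step.
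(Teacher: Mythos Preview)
Your outline has a genuine gap: the hard part of the proof is exactly what you defer. Two concrete problems.

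\textbf{Phase 3 is ill-posed.} After Phases 1--2 the set of undominated vertices does not form a graph of minimum degree $\ge 3$. A vertex $w$ with $\deg_G(w)\ge 3$ may have all three of its $G$-neighbours already in $D$ (or merely dominated), so in the ``residual'' graph $w$ has degree $0$. Reed's argument needs a min-degree-$3$ \emph{graph} on which to build a path cover, not a set of vertices to be dominated inside a larger graph. You cannot simply quote the $\tfrac{3}{8}n$ bound on the residual, and there is no evident way to repair this: the interface between the Phase-1/2 vertices and the rest is precisely where the difficulty lies.

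\textbf{The discharging is not done.} You say the main obstacle is ``calibrating the discharging rules so that $\mu^\ast(v)\le \tfrac{3}{7}$ holds uniformly,'' and then do not calibrate them. That calibration \emph{is} the theorem. Likewise, ``standard augmentation arguments'' to avoid short cover paths are not standard here; paths of order $8$ (and the configurations around them) are the worst case and require substantial machinery.

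For comparison, the paper does not split into pendant/degree-$2$/residual phases at all. It takes a single vertex-disjoint-path cover of the \emph{whole} graph, but one chosen to satisfy seven structural invariants (obtained by minimising a seven-component lexicographic potential). On top of that cover it runs two global procedures (``accepting'' out-endpoints of $1$- and $2$-paths, and ``forcing'' for certain order-$8$ paths), and then a discharging scheme with four explicit rules that moves charge between \emph{paths} of the cover and acceptors/forced vertices. The proof that every cover path is ``safe'' is a long case analysis by order ($1,2,3,4,5,7,8,11,\ge 14$), with order-$8$ paths taking the most work. The distance conditions on $1$- and $2$-vertices are used repeatedly inside this analysis, not as a preprocessing step. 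Your three-phase decomposition discards exactly the global structure that makes the counting go through.
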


Let $(G,k)$ be the input instance of {\sc Nonblocker}.
Our kernelization algorithm applies rules R1-R6 as long as possible. It is clear that it can be checked in polynomial time whether a particular rule applies, and each rule is applied in linear time. Since in every rule $|V(G)|+|E(G)|$ decreases, it follows that the whole algorithm works in polynomial time (it can be even implemented in $O(n\log n)$ time but we skip the details). Let $(G',k')$ be the resulting instance. Since all the rules are correct from c) it follows that $k'\le k$. If $k' \le \frac{4}7|V(G')|$ then by Observation~\ref{obs:rules} and Theorem~\ref{thm:main} we know that $G'$ has a dominating set of size at most $\frac{3}7|V(G')|\le|V(G')|-k'$ and the algorithm returns the answer YES. Otherwise $|V(G')|\le\frac{7}4k'\le\frac{7}4k$ so we get a $\frac{7}4k$-kernel. 

\section{Proof of Theorem~\ref{thm:main}, basic setup}

In our proof we extend the approach of Reed's seminal paper~\cite{bruce}.
Let us introduce some basic notation (mostly coming from~\cite{bruce}).

Whenever it does not lead to ambiguity, if $P$ is a path then $P$ refers also to the set of vertices of $P$.
The {\em order} of a path $P$, denoted by $|P|$ is the number of its vertices (as opposed to the {\em length} of $P$ which is the number of edges, i.e.\ $|P|-1$).  
For $i\in \{0,1,2\}$, a path $P$ is an {\em $i$-path}, if $|P|\equiv i \pmod{3}$ (note that we modify the standard definition here but we prefer to be consistent with~\cite{bruce}). 
A {\em dangling path} in a graph $G$ is a path of order two with exactly one endpoint of degree 1 in $G$.

If $x$ is a vertex of a path $P$ and $P-x$ consists of an $i$-path and a $j$ -path, then $x$ is called an $(i,j)$-vertex
of $P$.
An endpoint $x$ of a path $P$ in graph $G$ is an {\em out-endpoint} if $x$ has a neighbor outside of $P$. 

A {\em vdp-cover} of a graph $G$ is a set $S$ of vertex-disjoint paths that contain all vertices of $G$.
By $S_i$ we denote the set of $i$-paths in $S$.

The idea of Reed's paper~\cite{bruce} is to find a carefully selected vdp-cover $S$ and then consider the paths of $S$ one by one and for each such path choose some of its vertices to be in the dominating set. In~\cite{bruce} it is shown that if $G$ is of minimum degree at least 3, then the dominating set is of size at most $3/8n$. Clearly, for any path $P$ of $S$ it is enough to choose $\lceil |P|/3 \rceil$ vertices to dominate the whole $P$. If $P$ is a 0-path, or if $P$ is long enough then this is at most $\frac{3}8|P|$. Hence only short 1- and 2-paths remain. A careful analysis in~\cite{bruce} shows that for each short 1-path (resp. 2-path) $P$, if $G[P]$ does not contain a dominating set of size $\lfloor |P|/3 \rfloor$ then one (resp. two) of its endpoints has a neighbor on some path different from $P$ and this neighbor can dominate the endpoint. In our case, when vertices of degree 1 and 2 are allowed this is not always possible: $G[P]$ has fewer edges and it may happen that both endpoints are of degree 1. It turns out that the most troublesome paths are the dangling paths and the paths of order 8. Our strategy is to find a cover that avoids such paths as much as possible. Although we are not able to get rid of them completely, it turns out that it is enough to exclude some configurations that contain these paths.

In the following lemma we describe the properties of the cover we use. It is an extension of the construction in~\cite{bruce}. Our contribution here is the addition of (B4)-\bref{B:last} and the explicit statement of the construction algorithm.

\begin{lemma}
\label{lem:cover}
For any graph $G$ one can find in polynomial time a vdp-cover $S$ of $G$ with the following properties.
Let $x$ be an out-endpoint of any 1-path or 2-path $P_i$ in $S$.
Let $y$ be a neighbor of $x$ on a path $P_j$, $j\ne i$ and let $P_j=P_j'yP_j''$. 
Then,
\begin{enumerate}[(B1)]
 \item \label{B:not-1} \label{B:1} $P_j$ is not a 1-path,
 \item if $P_j$ is a 0-path, then both $P_j'$ and $P_j''$ are 1-paths,
 \item if $P_j$ is a 2-path, then both $P_j'$ and $P_j''$ are 2-paths,
 \item \label{B:not-8} if $|P_j|=8$, then $P_i$ is a 1-path,
 \item \label{B:dang} if $P_j$ is a 2-path and $P_i$ is a dangling path, then either one of $P_j'$, $P_j''$ is dangling or $|P_j|\in\{11,17\}$ and $|P_j'|=|P_j''|$,
 \item \label{B:ord-5}if $P_j$ is a 2-path and $z$ is the common endpoint of $P_j$ and $P_j'$, then each neighbor of $z$ on $P_j''$ is a $(2,2)$-vertex.
 \item \label{B:0-path} every $0$-path in $S$ is of order 3.
 \label{B:last}
\end{enumerate}
\end{lemma}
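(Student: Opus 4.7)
The plan is to construct $S$ as a lexicographic minimizer of a potential
\[
\Phi(S) = (|S|,\,|S_1|,\,|S_2|,\,\phi_4,\,\phi_5,\,\phi_6,\,\phi_7),
\]
where $\phi_4,\ldots,\phi_7$ count the violations of (B4)--(B7) in $S$. Since the collection of vdp-covers of $G$ is finite, a lex-minimizer exists and can be computed in polynomial time by starting from any cover and repeatedly performing the basic \emph{rerouting move}: given an out-endpoint $x$ of a path $P_i$ and an external neighbor $y$ of $x$ lying on $P_j=P_j'yP_j''$, cut $P_j$ at $y$ and reattach either $P_j'$ or $P_j''$ to $P_i$ via the edge $xy$. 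This preserves the covered vertex set and the number of paths, so the primary coordinate $|S|$ is never raised.

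The proof then proceeds property by property. Properties (B1)--(B3) together are essentially Reed's original lemma \cite{bruce}, and I would recover them by showing that each forbidden configuration admits a rerouting that strictly decreases $(|S_1|,|S_2|)$ lexicographically. The argument is arithmetic: writing $|P_j|=3m+r$ and choosing among rerouting via $P_j'$, via $P_j''$, or via the mirror move from the other out-endpoint of $P_i$ (if any), one of the resulting covers has a favourable mod-$3$ profile, contradicting minimality.

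Properties (B4)--(B6) are the new contributions and are each enforced by reducing $\phi_4,\phi_5,\phi_6$ without increasing $(|S|,|S_1|,|S_2|)$. For (B4), if $|P_j|=8$ and $P_i$ is a $2$-path, rerouting splits the $8$-path into pieces whose mod-$3$ congruences decrease the number of $8$-paths and yield a $0$-path, without introducing new $1$- or $2$-paths. For (B6), the condition that the neighbor of the common endpoint on $P_j''$ is a $(2,2)$-vertex is enforced by an analogous local exchange that shifts the attachment inward along $P_j''$. Finally, (B7) is imposed by post-processing: once $\Phi$ is minimized, split every $0$-path of order $\ge 6$ into order-$3$ pieces. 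This increases $|S|$, but it is done after all other properties are secured, and since the set of $1$-paths and $2$-paths is unchanged by the splitting, (B1)--(B6) are preserved.

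I expect (B5) to be the main obstacle. The dichotomy between ``one of $P_j'$, $P_j''$ is dangling'' and ``$|P_j|\in\{11,17\}$ with $|P_j'|=|P_j''|$'' is oddly specific and suggests that the standard rerouting argument barely fails precisely at these two path lengths. The plan is to enumerate the position of $y$ along the $2$-path $P_j$ and, by a careful case analysis on the mod-$3$ residues and on whether the subpaths produced by the move are dangling or admit an internal out-endpoint, show that in every other situation some candidate rerouting strictly decreases $\phi_5$; the residual obstructions are then exactly $|P_j|\in\{11,17\}$ with $|P_j'|=|P_j''|$, where symmetry prevents any of the candidate moves from being strictly improving.
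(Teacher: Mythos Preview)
Your overall framework (lexicographic potential, local rerouting) matches the paper, but the specific potential you propose does not work, and several of your claims about the moves are wrong.

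\textbf{The coordinates $(|S|,|S_1|,|S_2|)$ are not enough for (B1)--(B3).} Take $P_i$ a $2$-path and $P_j$ a $1$-path with $|P_j'|\equiv 1$ and $|P_j''|\equiv 2\pmod 3$. Rerouting through $P_j'$ gives paths of types $(1,2)$; rerouting through $P_j''$ gives $(2,1)$. In either case $|S|,|S_1|,|S_2|$ are all unchanged, so your potential does not drop and you cannot conclude (B1). Reed's actual potential (which the paper adopts) adds the tie-breakers $r_3=\sum_{P\in S_0}|P|$ and $r_4=\sum_{P\in S_1}|P|$; in the example above the second rerouting strictly decreases $r_4$. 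Similar stalls occur for (B2) and (B3). Your appeal to a ``mirror move from the other out-endpoint of $P_i$'' does not save this: that endpoint need not be an out-endpoint at all.

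\textbf{Counting violations is the wrong secondary potential for (B4)--(B6).} A rerouting move turns an interior vertex of $P_j$ into an endpoint of a new $2$-path; if that vertex happens to have a neighbour on some other $8$-path, you have manufactured a fresh (B4) violation, so $\phi_4$ need not decrease. The paper avoids this by using structural proxies that the move controls directly: $r_5$ is the \emph{number of order-$8$ paths} and $r_6$ the \emph{number of dangling paths}. The (B4) move destroys one $8$-path and, with a case split on $|P_i|=5$ versus $|P_i|\neq 5$, creates none, so $r_5$ drops regardless of what the new endpoints see. Incidentally, your description of that move is wrong: since (B3) already holds, both $P_j'$ and $P_j''$ are $2$-paths (orders $2$ and $5$), and the two new paths are again $2$-paths; no $0$-path appears and $|S_1|,|S_2|$ are untouched.

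\textbf{(B5) is not the hard case; (B6) is.} The paper dispatches (B5) in one paragraph: choose the rerouting side so that neither new path has order $8$ (keeping $r_5$) and so that the dangling $P_i$ is absorbed into a path of order $\ge 5$ (dropping $r_6$); the stated exceptions $|P_j|\in\{11,17\}$ with $|P_j'|=|P_j''|$ are exactly where this choice is blocked. By contrast, (B6) is \emph{not} handled by the basic rerouting at all: the paper uses the offending chord $zv_t$ to form a cycle and re-cut it, producing a new $0$-path $R$ and a longer path $P$ so that $r_1$ or $r_2$ strictly drops. Your sketch (``shift the attachment inward along $P_j''$'') does not describe this move.

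Your post-processing idea for (B7) is fine in spirit (the paper instead folds it into the potential via $r_7=n-|S_0|$), but it only works once (B1)--(B6) are genuinely secured, which requires fixing the potential as above.
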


\begin{proof}
 A {\em potential} of a cover $S$ is a tuple $\Phi(S)=(r_1,r_2,\ldots,r_7)$, where
\begin{itemize}
 \item $r_1 = 2|S_1|+|S_2|$,
 \item $r_2 = |S_2|$,
 \item $r_3 = \sum_{P\in S_0}|P|$,
 \item $r_4 = \sum_{P\in S_1}|P|$,
 \item $r_5$ is the number of paths of order 8 in $S$,
 \item $r_6$ is the number of dangling paths in $S$,
 \item $r_7 = n-|S_0|$.
\end{itemize}

For two covers $S$ and $S'$ with potentials $\Phi(S)=(r_1,\ldots,r_7)$ and $\Phi(S')=(r'_1,\ldots,r'_7)$ we say that $\Phi(S')<\Phi(S)$ if $\Phi(S')$ is smaller than $\Phi(S)$ in lexicographic order, i.e.\ for some $i=1,\ldots, 7$ we have $r_j=r'_j$ for $j<i$ and $r_i<r'_i$.
We will show that if one of the conditions \bref{B:1}-\bref{B:last} does not hold then we can modify the vdp-cover $S$ to get a new cover $S'$ with strictly smaller potential. It will be clear from the proof that the modification can be done in linear time. Since for every $i=1,\ldots,7$ we have $r_i=O(n)$, it follows that if we start from an arbitrary vdp-cover $S$ then after $O(n^7)$  modifications we get a cover that satisfies all of~\bref{B:1}-\bref{B:last} and the claim of the lemma will follow. 

Reed~\cite{bruce} showed that we can decrease the potential if one of \bref{B:1}-(B3) does not hold (Observations 1--3 in~\cite{bruce}, see also Lemma 1 in~\cite{kostochka}).

Assume \bref{B:not-8} does not hold, i.e.\ $|P_j|=8$ and $P_i\in S_2$. Then by (B3) both $P_j'$ and $P_j''$ are 2-paths and hence w.l.o.g.\ $|P_j'|=2$ and $|P_j''|=5$.
If $|P_i|\ne 5$, we set $S'=S\setminus\{P_i,P_j\}\cup\{P_iyP'_j, P_j''\}$. Note that both $P_iyP'_j$ and $P_j''$ are 2-paths so $r_1,\ldots,r_4$ do not change. Also $|P_j''|=5$ and $|P_iyP'_j|\ne 8$ so $r_5$ decreases.
If $|P_i|=5$ we set $S'=S\setminus\{P_i,P_j\}\cup\{P_j',P_iyP''_j\}$. Again, both $P_iyP'_j$ and $P_j''$ are 2-paths so $r_1,\ldots,r_4$ do not change. Also $|P_j'|=2$ and $|P_iyP'_j|= 11$ so $r_5$ decreases.

Assume \bref{B:dang} does not hold. Then by (B3) both $P_j'$ and $P_j''$ are 2-paths. 
By symmetry we can assume that $|P_j'|\ne 5$ since if $|P_j'|=|P_j''|=5$ then \bref{B:dang} holds.
Also, we can assume that $|P_j''|\ne 8$ since otherwise we know that $|P_j'|\ne 8$ and we can swap the names of $P_j'$ and $P_j''$ and get $|P_j'|=8\ne 5$ and $|P_j''|\ne 8$. Then we set $S'=S\setminus\{P_i,P_j\}\cup\{P_iyP'_j, P_j''\}$. Note that both $P_iyP'_j$ and $P_j''$ are 2-paths so $r_1,\ldots,r_4$ do not change. Also $|P_j''|\ne 8$ and $|P_iyP'_j|\ne 8$ so $r_5$ does not increase. Since $|P_j''|$ is not a dangling path (otherwise~\bref{B:dang} holds) and $|P_iyP'_j|\ge 5$, $r_6$ decreases by 1. 

Assume \bref{B:ord-5} does not hold. 
Let $P_j=v_1\ldots v_{3p+2}$ for some $p\ge 1$ where $v_1$ is the common endpoint of $P_j$ and $P_j'$. By (B3) $y=v_{3q}$, $1\le q\le p$.
We assumed that for some $r \ge q$ we have $v_1v_{3r+1}\in E$ or $v_1v_{3r+2}\in E$.  
If $v_1v_{3r+1}\in E$ then we consider the paths $P=v_{3p+2}v_{3p+1}\ldots v_{3r+1}v_1v_2\ldots v_{3q}P_i$ and $R=v_{3q+1}\ldots v_{3r}$ (if $q=r$ then $R$ is empty).
If $v_1v_{3r+2}\in E$ then we consider the paths $P=v_{3q+1}v_{3q+2}\ldots v_{3r+2}v_1v_2\ldots v_{3q}P_i$ and $R=v_{3r+3}\ldots v_{3p+2}$ (if $r=p$ then $R$ is empty).
We set $S'=S\setminus\{P_i,P_j\}\cup\{P,R\}$.
Note that $|R|\equiv 0\pmod 3$ and $|P|\equiv|P_i|+|P_j|\pmod 3$.
Hence if $P_i$ is a 1-path then both $P$ and $R$ are 0-paths so $r_1$ decreases and if $P_i$ is a 2-path then $P$ is a 1-path and $R$ is a 0-path, so $r_1$ stays the same and $r_2$ decreases.

If \bref{B:0-path} does not hold, we pick any $0$-path $P$ of order at least 6 and replace it by two 0-paths, one of order 3 and one of order $|P|-3$. Clearly, the potential decreases.
 \end{proof}

Let $S$ be the cover from Lemma~\ref{lem:cover}. Similarly as in~\cite{bruce}, some of the out-endpoints of the paths in $S$ will be dominated by vertices of other paths which we call accepting. Now we describe our method for finding these paths.

\noindent {\bf Accepting procedure.}
First, for every path $P\in S_1$ with at least one out-endpoint we mark exactly one, arbitrarily chosen, out-endpoint.
Second, for every path $P$ of order $|P|\in\{2,5,8\}$ and with two out-endpoints we mark both of these endpoints.

We say that vertex $v$ is a {\em neighbor} of path $P$ if $v\not\in V(P)$ and $v$ is a neighbor of a vertex of $P$.
Path $P\in S$ is {\em dangerous} if
\begin{enumerate}[$(i)$]
 \item $|P|=8$,
 \item $P$ has exactly one marked neighbor $v$,
 \item $v$ has exactly one neighbor on $P$,
 \item $\deg_G(v) > 1$,
 \item the path in $S$ that contains $v$ is of order 1,
 \item $P$ has at most one out-endpoint.
\end{enumerate}

As long as there is a non-dangerous path $P$ with a marked neighbor, we pick such a path $P$ and for its every marked neighbor $v$ we choose one vertex $w\in N(v)\cap P$ and $w$ {\em accepts} $v$. Then $v$ becomes unmarked and we call $w$ the {\em acceptor} of $v$.
If $|P|\in\{5,8\}$ and $P$ has exactly one out-endpoint $x$ we mark $x$, unless $x$ is already accepted.
This finishes the description of the accepting procedure.

All vertices that are marked after the above procedure finishes are called {\em rejected}.
A path from $S$ is {\em rejected} if it contains a rejected vertex.
The following observation follows from $(iii)$, $(iv)$ and $(v)$.

\begin{observation}
\label{obs:dangerous}
Every rejected path is of order 1 and it has at least two neighboring dangerous paths.
\end{observation}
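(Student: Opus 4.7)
The plan is to show that the observation falls out directly from the termination condition of the accepting procedure combined with the definition of a dangerous path. First I would fix a rejected vertex $v$ lying on some path $Q\in S$, and record one bookkeeping fact: every event that marks a vertex — the initial marking of out-endpoints of $S_1$-paths and of order-$2,5,8$ $2$-paths, and the re-marking of the unique out-endpoint of a processed order-$5$ or $8$ path — marks an out-endpoint. Hence $v$ has at least one neighbor outside $Q$, and for any path $P'\neq Q$ containing a neighbor of $v$, $v$ is a marked neighbor of $P'$.

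The key step is to use termination. The loop halts precisely when no non-dangerous path has a marked neighbor; at termination $v$ is still marked, so every path $P'\neq Q$ that contains a neighbor of $v$ must be dangerous at that moment, and therefore satisfies conditions $(i)$--$(vi)$ with $v$ playing the role of the unique marked neighbor guaranteed by $(ii)$. This is the only step that is not purely definitional, and it is also the step I would flag as the "main obstacle" — though really it is just a careful reading of the termination condition.

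From here the two claims read off by specializing the dangerous-path conditions to each such $P'$. Condition $(v)$ applied to $P'$ says that the path containing $v$ has order $1$, which gives $|Q|=1$ and the first half of the observation. Condition $(iv)$ then gives $\deg_G(v)\geq 2$, and since $|Q|=1$ all neighbors of $v$ are out-neighbors. Finally, condition $(iii)$ forces each dangerous $P'$ to contain exactly one neighbor of $v$, so the $\geq 2$ neighbors of $v$ must sit on at least two distinct paths $P'$, each of which is dangerous. This produces the required two neighboring dangerous paths and completes the plan.
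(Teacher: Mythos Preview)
Your proposal is correct and follows essentially the same approach as the paper, which simply states that the observation follows from conditions $(iii)$, $(iv)$ and $(v)$ in the definition of a dangerous path. You have merely spelled out the implicit details: that marked vertices are always out-endpoints, that the termination condition forces every path neighboring a still-marked vertex to be dangerous, and then how $(v)$, $(iv)$, $(iii)$ yield, respectively, $|Q|=1$, $\deg_G(v)\ge 2$, and the pigeonhole step giving two distinct dangerous neighbors.
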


A {\em weak path} is a path $P\in S$ such that $|P|=8$, $P$ accepts exactly one neighbor $v$, $v$ has {\em two} neighbors on $P$, the path in $S$ that contains $v$ is of order 1 and $P$ has no out-endpoints.

Consider a weak path $P=v_1\ldots v_8$. Then exactly one vertex of $P$ is an acceptor, and by (B3) it is either $v_3$ or $v_6$.
By symmetry assume $v_3$ is an acceptor. 
Then $v_3$ accepts exactly one vertex, say $v$, and $vv_6\in E$.
However, if $\deg_G(v_5)=2$ then we change the acceptor of $v$ from $v_3$ to $v_6$. 
Note that then $\deg_G(v_4)\ge 3$. 
Thus the following invariant holds.

\begin{inv}
\label{inv}
If we number vertices of a weak path $P=v_1\ldots v_8$ so that $v_3$ is an acceptor then $\deg_G(v_5)\ge 3$. 
\end{inv}

The intuition behind the notion of dangerous path is that it cannot afford accepting a vertex. As we see, a weak path is very close to being dangerous.
A weak path can afford accepting a vertex, but it needs additional help from other paths. This ``help'' is realized by the following procedure.

\noindent {\bf Forcing procedure. }
Now we define a certain set $F\subset V$. The elements of $F$ are called {\em forced vertices}.
The set $F$ is constructed by the following procedure. 
Begin with empty $F$.
Next consider weak paths of $S$, one by one.
Let $P$ be such a weak path.
If $P\cap F \ne \emptyset$ we skip $P$.
Otherwise, let us number vertices of $P=v_1\ldots v_8$ so that $v_3$ is an acceptor.
If $v_5$ has a neighbor outside $P$ then we choose exactly one such neighbor $x$, we add $x$ to $F$ and $x$ becomes {\em forced by $P$}. 
This finishes the description of the forcing procedure.

The following observation follows easily from (B3).

\begin{observation}
\label{obs:endpoint-not-forces}
If $w$ is an endpoint of a path $P\in S_1\cup S_2$ then $w\not\in F$.
\end{observation}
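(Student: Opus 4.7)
The plan is to argue by contradiction using property \bref{B:not-1}--(B3) of Lemma~\ref{lem:cover} evaluated at the vertex from which forcing takes place on a weak path.

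First I would unpack the definitions. By the forcing procedure, any $w \in F$ is a neighbor of some vertex $v_5$ lying on a weak path $P = v_1 \cdots v_8$, with $w \notin V(P)$, and the indexing is chosen (in accordance with Invariant~\ref{inv}) so that the acceptor on $P$ is $v_3$. Since $|P| = 8 \equiv 2 \pmod 3$, the path $P$ is a $2$-path, so $P \in S_2$. Splitting $P$ at $v_5$ yields $v_1 v_2 v_3 v_4$ of order $4$ (hence a $1$-path) and $v_6 v_7 v_8$ of order $3$ (hence a $0$-path), so $v_5$ is a $(1,0)$-vertex of $P$ — in particular, not a $(2,2)$-vertex.

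Now suppose for contradiction that $w$ is an endpoint of some $P_i \in S_1 \cup S_2$. Since $w \notin V(P)$ we have $P_i \neq P$, and the neighbor $v_5$ of $w$ lies on $P$, hence outside $V(P_i)$; thus $w$ is an out-endpoint of $P_i$. The hypotheses of Lemma~\ref{lem:cover} therefore apply with this $P_i$, with $P_j = P$ and $y = v_5$. Property (B3) then forces both components of $P - v_5$ to be $2$-paths, contradicting the computation above.

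The only delicate point is ensuring that $w$ is automatically an \emph{out}-endpoint of $P_i$ rather than merely an endpoint, so that Lemma~\ref{lem:cover} can be invoked at all; this is immediate from $v_5 \in V(P) \setminus V(P_i)$ being a neighbor of $w$. No cover properties beyond (B3) are needed, making the observation a direct consequence of how the weak-path indexing interacts with the $2$-path structure forced by (B3).
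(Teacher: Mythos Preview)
Your argument is correct and is exactly the route the paper has in mind: the paper simply states that the observation ``follows easily from (B3),'' and you have spelled out precisely that derivation, namely that a forced vertex $w$ would be an out-endpoint adjacent to $v_5$ on the $2$-path $P$ of order $8$, whereupon (B3) would force $v_5$ to be a $(2,2)$-vertex, contradicting the split $4+3$. There is nothing to add.
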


In what follows we construct a certain dominating set $D$. As we will see, for some paths $P$ of $S$ the ratio $|P\cap D|/|P|$ is at most $\frac{3}{7}$, and for some of them it is larger than $\frac{3}{7}$. However, we show that the later ones are amortized by the former. To this end we introduce the following discharging procedure (which is {\em not} a part of the construction of $D$ but it helps to bound $|D|$).
We assume that each vertex $v\in V$ and path $P\in S$ is assigned a rational number, called {\em charge}, which is initially 0.
By {\em sending charge} of value $\alpha$ from $x\in V\cup S$ to $y\in V\cup S$ we mean that the charge of $x$ decreases by $\alpha$ and the charge of $y$ increases by $\alpha$. The charge is sent according to the following rules.

\begin{enumerate}[{\bf Rule D1}]
 \item Let $v$ be an endpoint of a path $P\in S$ such that $v$ is accepted by a vertex $w$.
 If $P\in S_1$ and $|P|\ge 4$, then $w$ sends $\frac{4}{7}$ to $P$.
 Otherwise, i.e.\ if $|P|\in\{1,2,5,8\}$, then $w$ sends $\frac{3}{7}$ to $P$.
 \item Every rejected path sends $\frac{2}{7}$ to each neighboring dangerous path.
 \item Every dangling path sends $\frac{1}{7}$ to each neighboring path.
 \item If a vertex $x$ is forced by a weak path $P$, then $x$ sends $\frac{6}{7}$ to $P$. 
\end{enumerate}

After applying all the discharging rules above, each vertex $v$ and each path $P\in S$ ends up with some amount of charge: the total charge it received minus the total charge it sent. 
For $x\in V\cup S$ let $\ch(x)$ denote the final amount of charge at $x$.
For $P\in S$, let $\sch(P)=\ch(P) + \sum_{v\in P} \ch(v)$. 
Note that the initial total charge in $G$ is equal to 0 and it does not change by applying the discharging rules, so $\sum_{P\in S}\sch(P)=0$.

Let $A$ be the set of all acceptors. We say that a path $P\in S$ is {\em safe} when there exists a set $D_P \subseteq P$ such that
\begin{enumerate}[a)]
 \item $D_P \cup A \cup F$ dominates $P$, i.e.\ $P \subset N[D_P \cup A \cup F]$,
 \item $P\cap(A \cup F) \subseteq D_P$,
 \item $|D_P| + \sch(P) \le \frac{3}{7}|P|$.
\end{enumerate}

\begin{lemma}
 If all paths in $S$ are safe, then $G$ has a dominating set of size at most $\frac{3}7n$.
\end{lemma}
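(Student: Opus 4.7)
The plan is to build the dominating set by simply taking the union $D = \bigcup_{P \in S} D_P$ of the witnessing sets guaranteed by safety, and then verify the two required properties (domination and size) by exploiting (i) that $S$ is a partition of $V(G)$ and (ii) that the discharging rules conserve total charge.

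First, I would check that $D$ dominates $G$. Given an arbitrary vertex $u \in V(G)$, since $S$ is a vdp-cover, $u$ lies on a unique path $P \in S$. By safety condition a), $P \subseteq N[D_P \cup A \cup F]$, so it suffices to argue $A \cup F \subseteq D$. But every vertex $x \in A \cup F$ belongs to exactly one path $Q \in S$, and then $x \in Q \cap (A \cup F) \subseteq D_Q \subseteq D$ by safety condition b). Hence $u$ is dominated by a vertex of $D$.

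Second, I would bound $|D|$. Since the paths in $S$ are vertex-disjoint and cover $V(G)$, we have $\sum_{P \in S} |P| = n$. Summing safety condition c) over all paths,
\[
|D| \;=\; \sum_{P \in S} |D_P| \;\le\; \sum_{P \in S} \left(\tfrac{3}{7}|P| - \sch(P)\right) \;=\; \tfrac{3}{7}\,n \;-\; \sum_{P \in S}\sch(P).
\]
The final sum vanishes: every discharging rule D1--D4 transfers charge between elements of $V \cup S$, so the total charge $\sum_{x \in V \cup S} \ch(x)$ remains $0$, and this total equals $\sum_{P \in S} \sch(P)$ by the definition $\sch(P) = \ch(P) + \sum_{v \in P}\ch(v)$ together with the fact that $S$ partitions $V$. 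Therefore $|D| \le \frac{3}{7}n$.

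There is essentially no obstacle: the lemma is a bookkeeping statement whose only subtlety is noticing that $A \cup F$ is automatically contained in $D$ (so we do not need to add acceptors/forced vertices by hand) and that the discharging is conservative (so the $\sch$ terms telescope to $0$). The real work sits in the later sections, where one must actually prove that every path is safe; here the two ingredients just need to be assembled.
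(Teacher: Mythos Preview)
Your proof is correct and follows essentially the same approach as the paper: define $D=\bigcup_{P\in S}D_P$, use condition b) to get $A\cup F\subseteq D$ and hence domination via a), then sum condition c) over all paths and use $\sum_{P\in S}\sch(P)=0$ to obtain $|D|\le\frac{3}{7}n$.
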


\begin{proof}
Since all paths in $S$ are safe, for each such path $P$ there is a set $D_P$ that satisfies conditions a)-c).
Then we define $D=\bigcup_{P\in S}D_P$.
By b), $A\cup F \subseteq D$. This together with a) implies that $D$ is a dominating set of $G$. 
Since $\sum_{P\in S}\sch(P)=0$, c) implies that $|D|=\sum_{P\in S} |D_P| \le \sum_{P\in S} \frac{3}{7}|P| = \frac{3}{7}n$.
 \end{proof}

In section~\ref{sec:safe} we show that $G$ satisfies the assumptions of Theorem~\ref{thm:main} then all paths in $S$ are safe. Together with the above lemma that finishes the proof of Theorem~\ref{thm:main}.

\section{All paths are safe}
\label{sec:safe} 
 
From now on we assume that $G$ satisfies the assumptions of Theorem~\ref{thm:main}.
The following lemma follows easily from the discharging rules.

\ignore{
Let $\ae$ denote the set of out-endpoints of paths in $P$ that are accepted.
\begin{lemma}
\label{lem:paths}
Let $P\in S$. 
\ignore{
If $P$ is dangerous then $\ch(P)=\frac{2}{7}$. 
If $P$ is dangling  then $\ch(P)\le-\frac{1}{7}$.
If $P$ is weak and $P$ forces a vertex then $\ch(P)=\frac{6}{7}$.
If $P$ is rejected then $\ch(P)\le-\frac{4}{7}$.
If $P$ is not any of the above kinds, }
If $P$ is neither dangerous, dangling, weak, nor rejected 
then $\ch(P)=\frac{3}{7}|P\cap\ae|$ when $|P|\in\{1,2,5,8\}$ and $\ch(P)=\frac{4}{7}|P\cap\ae|$ otherwise.
\end{lemma}
\begin{proof}
The claim follows easily from rules D2, D3 and D4. Note that Rule D2 applies at most once to $P$ because of $(ii)$, and so does Rule 4 because of the forcing procedure.
 \end{proof}
}

\begin{lemma}
\label{lem:vertices}
Let $v$ be a vertex of $G$.
If $v$ accepts a path from $S_1$ of order at least 4, then $\ch(v) \le -\frac{4}{7}-\frac{6}{7}[v\in F]$.
If $v$ accepts a path of order 1, 2, 5 or 8, then $\ch(v) \le -\frac{3}{7}-\frac{6}{7}[v\in F]$.
Otherwise $\ch(v) \le -\frac{6}{7}[v\in F]$.
 \end{lemma}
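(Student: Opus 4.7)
The plan is very short: the bound will follow immediately once we observe that no discharging rule ever sends charge \emph{to} a vertex. Going through the rules: D1 sends from an acceptor vertex to a path, D2 from a rejected path to a dangerous path, D3 from a dangling path to a neighboring path, and D4 from a forced vertex to a weak path. In each case the recipient is a path in $S$. Hence, for every vertex $v$, $\ch(v) = -\sigma(v)$, where $\sigma(v)$ denotes the total charge sent out by $v$.

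Next I would enumerate the ways in which $v$ can send charge. By D4, $v$ contributes $\frac{6}{7}$ exactly when $v\in F$, and this happens at most once: the forcing procedure adds only one vertex to $F$ per processed weak path and skips any weak path whose vertex set already meets $F$, so each forced vertex is forced by a unique weak path. By D1, whenever $v$ accepts an endpoint lying on a path $P$, it sends $\frac{4}{7}$ if $P\in S_1$ and $|P|\ge 4$, and $\frac{3}{7}$ otherwise (i.e.\ $|P|\in\{1,2,5,8\}$). All contributions to $\sigma(v)$ are non-negative, so each individual acceptance only helps make the bound tighter.

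The three cases of the lemma then drop out of a trivial split on which kinds of acceptances $v$ makes. If $v$ accepts at least one endpoint lying on an $S_1$-path of order $\ge 4$, that single D1 term already contributes $\frac{4}{7}$ to $\sigma(v)$, so together with the D4 term we obtain $\ch(v)\le -\frac{4}{7}-\frac{6}{7}[v\in F]$. Otherwise, if $v$ accepts at least one endpoint lying on a path of order $1,2,5$, or $8$, the corresponding D1 term contributes $\frac{3}{7}$, yielding $\ch(v)\le -\frac{3}{7}-\frac{6}{7}[v\in F]$. Otherwise $v$ has no acceptances at all and $\ch(v)=-\frac{6}{7}[v\in F]$.

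The only things I would double-check are the direction of each discharging rule and the uniqueness of the forcing weak path, since any oversight there (e.g.\ a rule misread as sending to a vertex, or a vertex forced by two distinct weak paths) would break the proof. Beyond these sanity checks, no real obstacle is anticipated.
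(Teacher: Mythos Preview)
Your proposal is correct and essentially matches the paper, which simply asserts that the lemma ``follows easily from the discharging rules.'' One small note: your argument for uniqueness of the forcing weak path is not quite sound (the skip condition tests $P\cap F\neq\emptyset$ for the weak path's own vertices, not whether the chosen neighbor $x$ is already in $F$), but this is harmless---if a vertex were forced by several weak paths it would only send more charge, and you need merely an upper bound on $\ch(v)$.
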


In what follows we will consider various kinds of paths in $S$ and we will show that they are safe. In many cases we will divide these paths into several subpaths, which we call ``bricks''. Then the safeness of paths from $S$ will be derived from the safeness of bricks, which is defined as follows.
We say that a path $P$ in a graph $G$ is $\alpha$-{\em safe} when there exists a set $D_P \subseteq P$ such that
\begin{enumerate}[a)]
 \item $D_P \cup A \cup F$ dominates $P$, i.e.\ $P \subset N[D_P \cup A \cup F]$,
 \item $P\cap(A \cup F) \subseteq D_P$, and
 \item $|D_P| + \sum_{v\in P}\ch(v) \le \alpha$.
\end{enumerate}
Note that if a path $P\in S$ is $\frac{3}{7}|P|$-safe it does {\em not} mean that it is safe, because in the definition of $\alpha$-safeness we ignore the charge of $P$. However the following claim is easy to verify.

\begin{lemma}
 \label{lem:safe}
 Let $P\in S$ and assume that $P=P_1\ldots P_k$.
 For $i=1,\ldots,k$ assume that path $P_i$ is $\alpha_i$-safe.
 If $\sum_{i=1}^k\alpha_i+\ch(P) \le \frac{3}{7}|P|$ then $P$ is safe. 
\end{lemma}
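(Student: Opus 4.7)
The plan is to take the union of the witness sets from the brick-level safeness certificates and verify that this union witnesses path-level safeness of $P$. For each brick $P_i$, $\alpha_i$-safeness supplies a set $D_{P_i}\subseteq P_i$ that dominates $P_i$ together with $A\cup F$, contains $P_i\cap(A\cup F)$, and satisfies $|D_{P_i}|+\sum_{v\in P_i}\ch(v)\le \alpha_i$. I would set $D_P:=\bigcup_{i=1}^k D_{P_i}$ and then check the three conditions (a)--(c) in the definition of safeness for $P$.

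Conditions (a) and (b) drop out of the union construction. For (a), each $P_i\subseteq N[D_{P_i}\cup A\cup F]\subseteq N[D_P\cup A\cup F]$, and taking the union over $i$ gives $P\subseteq N[D_P\cup A\cup F]$. For (b), any $v\in P\cap(A\cup F)$ lies in some brick $P_i$, so $v\in P_i\cap(A\cup F)\subseteq D_{P_i}\subseteq D_P$.

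For (c), I would sum the $k$ brick inequalities. Reading $P=P_1\ldots P_k$ as a vertex-partition of $V(P)$ (and noting that even if adjacent bricks happened to share an endpoint this would only make $|D_P|\le\sum_i|D_{P_i}|$ strict, which helps), the sum collapses to
$$|D_P|+\sum_{v\in P}\ch(v)\;\le\;\sum_{i=1}^k\alpha_i.$$
Adding $\ch(P)$ to both sides and using $\sch(P)=\ch(P)+\sum_{v\in P}\ch(v)$ together with the hypothesis $\sum_i\alpha_i+\ch(P)\le\tfrac{3}{7}|P|$ gives $|D_P|+\sch(P)\le\tfrac{3}{7}|P|$, which is exactly condition (c).

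The lemma is really a bookkeeping device that lets us analyze a long path by cutting it into bricks and combining afterwards, so there is no substantial obstacle. The one point to keep in mind is the deliberate mismatch in the two definitions: $\alpha$-safeness accounts only for the vertex charges $\sum_{v\in P_i}\ch(v)$, whereas path safeness involves the full $\sch(P)=\ch(P)+\sum_{v\in P}\ch(v)$; the extra $\ch(P)$ term on the hypothesis side is precisely what compensates for this, and checking that the algebra lines up is the whole content of the proof.
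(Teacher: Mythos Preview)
Your proof is correct and is exactly the natural argument; the paper itself omits the proof entirely, simply remarking that the claim ``is easy to verify,'' and your verification is precisely what is intended.
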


\begin{lemma}[3-brick Lemma]
 \label{lem:3brick}
Any path $P=v_1v_2v_3$ in $G$ such that $P\cap A \subseteq \{v_2\}$ is $\frac{8}{7}$-safe.
Moreover, if $v_2 \in A$, then $P$ is $\frac{6}7$-safe.
\end{lemma}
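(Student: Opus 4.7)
The plan is to construct $D_P$ explicitly by a case analysis on whether $v_2\in A$ and on which vertices of $P$ lie in $F$. Since $P\cap A\subseteq\{v_2\}$, neither $v_1$ nor $v_3$ is an acceptor. Condition (b) of $\alpha$-safeness forces $D_P\supseteq P\cap(A\cup F)$, and I will augment this starter set only when needed to meet condition (a). The charge bound in (c) will be verified using Lemma~\ref{lem:vertices}, which supplies $\ch(v)\le -\frac{3}{7}-\frac{6}{7}[v\in F]$ whenever $v\in A$, and $\ch(v)\le -\frac{6}{7}[v\in F]$ otherwise.

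For the ``moreover'' part, assume $v_2\in A$. Then $v_2\in D_P$ and $v_2$ already dominates $v_1$ and $v_3$, so I take $D_P=\{v_2\}\cup(\{v_1,v_3\}\cap F)$. Writing $e=[v_1\in F]+[v_3\in F]$ and $m=[v_2\in F]$, a direct computation yields
$$|D_P|+\sum_{v\in P}\ch(v)\;\le\;(1+e)+\bigl(-\tfrac{3}{7}-\tfrac{6}{7}m\bigr)+\bigl(-\tfrac{6}{7}e\bigr)\;=\;\tfrac{4}{7}+\tfrac{1}{7}e-\tfrac{6}{7}m\;\le\;\tfrac{6}{7}.$$

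For the main statement, assume $v_2\notin A$, so $P\cap A=\emptyset$. If $v_2\in F$, then the initial $D_P=P\cap F$ contains $v_2$, which dominates all of $P$, and the analogous calculation gives a total of at most $\frac{3}{7}\le\frac{8}{7}$. If $v_2\notin F$, I subdivide on $|\{v_1,v_3\}\cap F|$. When it is $0$, take $D_P=\{v_2\}$ (total $\le 1$); when it is $2$, the forced set $\{v_1,v_3\}$ already dominates $P$ (total $\le 2-\frac{12}{7}=\frac{2}{7}$); when it is $1$, WLOG $v_1\in F$, then $v_1$ dominates $v_2$ but $v_3$ still needs a dominator, so I add $v_3$ to $D_P$, yielding $|D_P|=2$ and total exactly $2-\frac{6}{7}=\frac{8}{7}$.

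This last subcase meets the bound with equality, which explains why $\frac{8}{7}$ is the right constant. The only subtlety is to avoid double-counting $v_2$ in $|D_P|$ when $v_2\in A\cap F$, but this is handled automatically by writing $D_P$ as a set union. Beyond this bookkeeping, the lemma reduces to the observation that each $F$-vertex placed in $D_P$ contributes $-\frac{6}{7}$ to the charge sum while adding only $+1$ to $|D_P|$, so forced vertices never worsen the bound; the tight case is driven solely by the asymmetry of a single endpoint being forced while the middle vertex is neither forced nor an acceptor.
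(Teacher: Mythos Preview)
Your proof is correct and follows essentially the same approach as the paper: both build $D_P$ from $P\cap(A\cup F)$, add a single vertex when needed to dominate $P$, and bound condition~(c) via Lemma~\ref{lem:vertices}. Your case split (on $v_2\in F$ and then on $|\{v_1,v_3\}\cap F|$) differs cosmetically from the paper's split (on $|F\cap P|\ge 2$ versus $\le 1$), and in the single-forced-endpoint case you pick $D_P=\{v_1,v_3\}$ where the paper picks $\{v_2\}\cup(F\cap P)$, but both choices yield the same tight bound $\frac{8}{7}$.
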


\begin{proof}
 By Lemma~\ref{lem:vertices}, for $v\in\{v_1,v_3\}$, $\ch(v)\le -\frac{6}{7}[v\in F]$.
 
 First assume $v_2$ is an acceptor. We put $D_P = P\cap(A \cup F)$. 
 Note that $|D_P|\le 1 + |F\cap \{v_1,v_3\}|$ and $\ch(v_2) \le -\frac{3}{7}-\frac{6}{7}[v\in F]$.
 Hence, $\sum_{v\in P}\ch(v) \le -\frac{3}{7}-\frac{6}{7}|F\cap P|$.
 It follows that $|D_P| + \sum_{v\in P}\ch(v) \le 1 + |F\cap \{v_1,v_3\}| -\frac{3}{7}-\frac{6}{7}|F\cap P| = \frac{4}{7} + \frac{1}{7}|F\cap \{v_1,v_3\}| \le \frac{6}7$.
 
 Now assume $v_2$ is not an acceptor. By Lemma~\ref{lem:vertices}, for any $v\in P$ we have $\ch(v) \le -\frac{6}{7}[v\in F]$. 
 If $|F\cap P|\ge 2$, we put $D_P = F\cap P$ and then $|D_P| + \sum_{v\in P}\ch(v) \le |F\cap P| -\frac{6}{7}|F\cap P| \le \frac{3}{7}$.
 Otherwise, i.e.\ when $|F\cap P|\le 1$, we put $D_P = \{v_2\} \cup (F\cap P)$ and then 
 $|D_P| + \sum_{v\in P}\ch(v) \le 1+|F\cap P| -\frac{6}{7}|F\cap P| = 1 + \frac{1}{7}|F\cap P| \le \frac{8}{7}$. 
 \end{proof}

\begin{lemma}[4-brick Lemma]
 \label{lem:4brick}
Any path $P=v_1v_2v_3v_4$ in $G$ such that $P\cap A \subseteq \{v_3\}$ is $2$-safe.
Moreover, if $v_3 \in A$, then $P$ is $(\frac{11}{7} + \frac{1}{7}|F\cap \{v_1,v_4\}|)$-safe (and hence $\frac{13}{7}$-safe).
\end{lemma}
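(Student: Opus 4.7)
The plan is to follow the template of Lemma~\ref{lem:3brick} and split on whether $v_3 \in A$. From Lemma~\ref{lem:vertices} together with $P \cap A \subseteq \{v_3\}$, we obtain $\ch(v_i) \le -\frac{6}{7}[v_i \in F]$ for $i \in \{1, 2, 4\}$; and, when $v_3 \in A$, the weakest of the three bounds of Lemma~\ref{lem:vertices} gives $\ch(v_3) \le -\frac{3}{7} - \frac{6}{7}[v_3 \in F]$, while when $v_3 \notin A$ we get $\ch(v_3) \le -\frac{6}{7}[v_3 \in F]$.

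First, suppose $v_3 \in A$. Condition~(b) forces $D_P \supseteq \{v_3\} \cup (F \cap P)$, so I take this as the starting approximation. Since $v_3$ dominates $\{v_2,v_3,v_4\}$, the only vertex that can fail to lie in $N[D_P \cup A \cup F]$ is $v_1$, and this happens exactly when $v_1, v_2 \notin F$; in that case I add $v_2$ to $D_P$, otherwise no addition is needed. Plugging the charge bounds into $|D_P| + \sum_v \ch(v)$ and separating the contribution of $[v_3 \in F]$ from those of $[v_i \in F]$ with $i \ne 3$, in each of the two subcases yields the target bound $\frac{11}{7} + \frac{1}{7}|F \cap \{v_1, v_4\}|$; the tight subcase is when $v_2$ has been added and $v_3 \notin F$.

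For the complementary case $v_3 \notin A$, one has $\sum_v \ch(v) \le -\frac{6}{7}|F\cap P|$. I set $D_P = (F\cap P) \cup X$, where $X \subseteq P$ is a smallest addition making $D_P$ dominate $P$; since any 4-vertex path admits a dominating set of size at most $2$ (e.g.\ $\{v_1, v_3\}$), a short case analysis on $|F \cap P| \in \{0, 1, 2, 3, 4\}$ confirms $|D_P| + \sum_v \ch(v) \le 2$ in every subcase (with equality only when $F\cap P = \emptyset$), which together with Case~1 also establishes the weaker $2$-safeness statement for all settings of $v_3$.

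The only delicate point is accounting for how $[v_3 \in F]$ interacts with the fact that in Case~1 the vertex $v_3$ is already forced into $D_P$ via $A$, so it should not be double-counted; happily, the $-\frac{6}{7}[v_3 \in F]$ charge absorbs any $+1$ from forced neighbors with margin to spare, so the inequality collapses to a routine verification. Apart from this bookkeeping, the argument is a direct extension of the proof of Lemma~\ref{lem:3brick}.
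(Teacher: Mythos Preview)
Your proposal is correct and follows essentially the same approach as the paper: split on whether $v_3\in A$, use the charge bounds from Lemma~\ref{lem:vertices}, and build $D_P$ from $(A\cup F)\cap P$ plus whatever is needed to dominate $P$. The only cosmetic differences are that the paper always includes $v_2$ in $D_P$ when $v_3\in A$ (rather than adding it only when $\{v_1,v_2\}\cap F=\emptyset$), and in the case $v_3\notin A$ the paper gives the explicit choice $D_P=\{v_3\}\cup(F\cap P)$ or $D_P=\{v_2\}\cup(F\cap P)$ depending on whether $\{v_1,v_2\}\cap F\ne\emptyset$, instead of your case analysis on $|F\cap P|$.
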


\begin{proof}
 By Lemma~\ref{lem:vertices}, for $v\in\{v_1,v_2,v_4\}$, $\ch(v)\le -\frac{6}{7}[v\in F]$.
 
 First assume $v_3$ is an acceptor. We put $D_P = \{v_2\}\cup P\cap(A \cup F)$. 
 Note that $|D_P|\le 2 + |F\cap \{v_1,v_4\}|$ and $\ch(v_3) \le -\frac{3}{7}-\frac{6}{7}[v\in F]$ by Lemma~\ref{lem:vertices}.
 Hence, $\sum_{v\in P}\ch(v) \le -\frac{3}{7}-\frac{6}{7}|F\cap P|$.
 It follows that $|D_P| + \sum_{v\in P}\ch(v) \le 2 + |F\cap \{v_1,v_4\}| -\frac{3}{7}-\frac{6}{7}|F\cap P| \le \frac{11}{7} + \frac{1}{7}|F\cap \{v_1,v_4\}| \le \frac{13}{7}$.
 
 Now assume $v_3$ is not an acceptor. By Lemma~\ref{lem:vertices}, for $v\in P$ we have $\ch(v) \le -\frac{6}{7}[v\in F]$ and hence $\sum_{v\in P}\ch(v)=-\frac{6}{7}|F\cap P|$. 
 If $F\cap P=\emptyset$, we put $D_P = \{v_2,v_3\}$ and then $|D_P| + \sum_{v\in P}\ch(v) =2$.
 Finally assume $F\cap P\ne\emptyset$. If $\{v_1,v_2\}\cap F \ne \emptyset$ we put $D_P = \{v_3\} \cup (F\cap P)$ and otherwise we put $D_P = \{v_2\} \cup (F\cap P)$. Then, $|D_P| + \sum_{v\in P}\ch(v) \le 1+|F\cap P| -\frac{6}{7}|F\cap P| = 1 + \frac{1}{7}|F\cap P| \le \frac{11}{7}$. 
 \end{proof}

\begin{lemma}[7-brick Lemma]
 \label{lem:7brick}
Any path $P=v_1\ldots v_7$ in $G$ such that $P\cap A \subseteq \{v_3,v_6\}$ is $3$-safe.
\end{lemma}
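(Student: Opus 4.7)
The plan is a case analysis on $B = A\cap P \subseteq \{v_3,v_6\}$; in each case I exhibit a set $D_P \supseteq (A\cap P)\cup(F\cap P)$ that dominates $P$ and satisfies the safeness inequality. By Lemma~\ref{lem:vertices} each acceptor in $P$ contributes at most $-\tfrac{3}{7}$ to $\sum_{v\in P}\ch(v)$ and each forced vertex contributes an additional $-\tfrac{6}{7}$, so the condition $|D_P|+\sum_{v\in P}\ch(v)\le 3$ reduces to
\[
|D_P| \;\le\; 3 + \tfrac{3}{7}|B| + \tfrac{6}{7}|F\cap P|.
\]
The proof then exploits that $P_7$ has several dominating sets of size $3$, namely $\{v_1,v_4,v_7\}$, $\{v_2,v_4,v_6\}$, $\{v_2,v_5,v_7\}$ and $\{v_1,v_3,v_6\}$, and that one of these contains any prescribed single vertex of $P$.

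When $B=\emptyset$ I take $D_P=(F\cap P)\cup X$. For $F\cap P=\emptyset$ pick $X=\{v_2,v_4,v_6\}$. For $|F\cap P|=1$ one of the listed $3$-sets already contains the forced vertex, so $|D_P|=3$. For $|F\cap P|\ge 2$ a short enumeration over the unordered pairs of positions of $F\cap P$ in $P$ shows that the uncovered residual $V(P)\setminus N_P[F\cap P]$ can always be dominated by $|X|\le 2$ additional vertices, whence $|D_P|\le |F\cap P|+2\le 3+\tfrac{6}{7}|F\cap P|$. When $B=\{v_3\}$ (the case $B=\{v_6\}$ is symmetric) vertex $v_3$ already covers $\{v_2,v_3,v_4\}$, so I set $D_P=\{v_3\}\cup(F\cap P)\cup Y$ with $Y$ at most two extras whose job is to cover $v_1$ together with any gap remaining in $\{v_5,v_6,v_7\}$; the tightest subcases are a single forced vertex at $v_4$, $v_5$ or $v_7$, where $|D_P|=4$, still within the budget $\tfrac{30}{7}$. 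Finally, when $B=\{v_3,v_6\}$ the two acceptors jointly dominate $\{v_2,\dots,v_7\}$, so $D_P=\{v_3,v_6\}\cup(F\cap P)\cup\{v\}$ with $v\in\{v_1,v_2\}$ works; here $|D_P|\le\max(3,|F\cap P|+1)$ and the budget $3+\tfrac{6}{7}+\tfrac{6}{7}|F\cap P|$ accommodates this for every $|F\cap P|\le 7$.

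The main obstacle is the bookkeeping in the middle case $B=\{v_3\}$: one must enumerate the positions of forced vertices and, in each configuration, pick the two ``free'' vertices of $Y$ so that they cover both $v_1$ and the unique gap among $\{v_5,v_6,v_7\}$ that is not handled by a nearby forced vertex. Everything else follows from a direct calculation using the charge budget and the catalogue of dominating-set templates for $P_7$ listed above.
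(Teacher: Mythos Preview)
Your overall strategy is sound and the case split on $B=A\cap P$ works, but there is one incorrect intermediate bound. In the case $B=\{v_3,v_6\}$ you assert $|D_P|\le\max(3,|F\cap P|+1)$ for $D_P=\{v_3,v_6\}\cup(F\cap P)\cup\{v\}$; this is false. Take $F\cap P=\{v_4\}$: then $D_P=\{v,v_3,v_4,v_6\}$ has size $4$, while $\max(3,|F\cap P|+1)=3$. The conclusion is still rescued by the cruder estimate $|D_P|\le 3+|F\cap P|$, which together with $\sum_{v\in P}\ch(v)\le -\tfrac{6}{7}-\tfrac{6}{7}|F\cap P|$ gives $|D_P|+\sum\ch(v)\le \tfrac{15}{7}+\tfrac{1}{7}|F\cap P|\le\tfrac{22}{7}<3$. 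Likewise, in the case $B=\{v_3\}$ you only verify the single-forced-vertex subcases explicitly; the assertion that these are ``tightest'' is correct but needs the observation that whenever $|Y|=2$ one necessarily has $\{v_1,v_2\}\cap F=\emptyset$ and $F\cap\{v_4,\dots,v_7\}$ fails to cover $\{v_5,v_6,v_7\}$, which forces $|F\cap P|$ small enough for the budget.

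The paper takes a different and more modular route: when $P\cap A\ne\emptyset$ it partitions $P=P_1P_2$ with $P_1=v_1v_2v_3v_4$ and $P_2=v_5v_6v_7$, and invokes the already-proved $4$-brick and $3$-brick lemmas (Lemmas~\ref{lem:4brick} and~\ref{lem:3brick}) to get $\alpha$-safeness values summing to at most $3$. Only the case $P\cap A=\emptyset$ is handled directly, and there the paper's enumeration is very short (three positions for a forced vertex). Your direct construction avoids the dependence on those earlier lemmas, at the cost of a longer enumeration and the bookkeeping pitfalls above; the paper's decomposition sidesteps all of the $B\ne\emptyset$ casework entirely.
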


\begin{proof}
 First assume $P\cap A \ne \emptyset$.
 Then we partition $P=P_1P_2$ where $P_1=v_1\ldots v_4$, $P_2=v_5v_6v_7$.
 If $P_1\cap A\ne\emptyset$ then by Lemma~\ref{lem:4brick} $P_1$ is $\frac{13}{7}$-safe and by Lemma~\ref{lem:3brick} $P_2$ is $\frac{8}{7}$-safe, so $P$ is $3$-safe.
 Otherwise, i.e.\ if $P_2\cap A\ne\emptyset$ then by Lemma~\ref{lem:4brick} $P_1$ is $2$-safe and by Lemma~\ref{lem:3brick} $P_2$ is $\frac{6}7$-safe, so $P$ is $(2+\frac{6}7)$-safe.

 Hence we can assume $P\cap A=\emptyset$.
 By Lemma~\ref{lem:vertices},  $\sum_{v\in P}\ch(v)\le -\frac{6}{7}|F\cap P|$.
 If $F\cap P=\emptyset$, we put $D_P = \{v_2,v_4,v_6\}$ and then $|D_P| + \sum_{v\in P}\ch(v) =3$.
 Finally assume $F\cap P\ne\emptyset$. 
 If $\{v_1,v_2\}\cap F \ne \emptyset$ we put $D_P = \{v_3,v_6\} \cup (F\cap P)$. 
 If $\{v_6,v_7\}\cap F \ne \emptyset$ we put $D_P = \{v_2,v_5\} \cup (F\cap P)$. 
 If $\{v_3,v_4,v_5\}\cap F \ne \emptyset$ we put $D_P = \{v_2,v_6\} \cup (F\cap P)$. 
 In every of these three cases $|D_P| + \sum_{v\in P}\ch(v) < 2+|F\cap P| -\frac{6}{7}|F\cap P| = 2 + \frac{1}{7}|F\cap P| \le 3$. 
 \end{proof}

\begin{lemma}[8-brick Lemma]
 \label{lem:8brick}
Any path $P=v_1\ldots v_8$ in $G$ such that $P\cap A =\emptyset$ is $\frac{22}7$-safe.
\end{lemma}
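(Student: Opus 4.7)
By Lemma~\ref{lem:vertices}, since $P\cap A=\emptyset$ every $v\in P$ satisfies $\ch(v)\le -\frac{6}{7}[v\in F]$, so $\sum_{v\in P}\ch(v)\le -\frac{6}{7}k$ where $k=|F\cap P|$. To show that $P$ is $\frac{22}{7}$-safe it therefore suffices to exhibit $D_P\supseteq F\cap P$ dominating $P$ with $|D_P|\le\frac{22+6k}{7}$; unpacked, this requires $|D_P|\le 3$ for $k=0$, $|D_P|\le 4$ for $k=1$, $|D_P|\le k+2$ for $2\le k\le 5$, while for $k\ge 6$ the trivial bound $|D_P|\le 8$ already works.

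The boundary cases are immediate. The set $\{v_2,v_5,v_8\}$ dominates $P$, so for $k=0$ I would take $D_P=\{v_2,v_5,v_8\}$, and for $k=1$ with $F\cap P=\{v_f\}$ I would take $D_P=\{v_f,v_2,v_5,v_8\}$, which has at most four elements (the lemma is in fact tight at $k=1$ when $v_f\in\{v_3,v_6\}$, forcing $|D_P|=4$). For $k\ge 6$ we can set $D_P=F\cap P$ and adjoin a vertex or two as needed to dominate the at most two remaining vertices; $|D_P|\le 8$.

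The heart of the proof is the range $2\le k\le 5$. I would let $U=V(P)\setminus N_P[F\cap P]$, take a minimum dominating set $T$ of the induced subgraph $P[U]$, and put $D_P=(F\cap P)\cup T$; since $U\cap (F\cap P)=\emptyset$ we have $|D_P|=k+|T|$, while $D_P$ dominates $P$ and contains $F\cap P$ by construction. Everything reduces to the combinatorial claim that $P[U]$ admits a dominating set of size at most $2$ whenever $|F\cap P|\ge 2$. This is the main obstacle of the proof, and I would verify it by a direct inspection of $P_8$. The key observations are: any two vertices of $P_8$ have at least three vertices in their joint closed neighborhood, so $|U|\le 5$; and $P[U]$ has at most two components, since producing a third would require two disjoint ``gaps'' inside $N_P[F\cap P]$ together with $v_1,v_8\in U$, which cannot happen in $P_8$ when $|F\cap P|\ge 2$. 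A short case check then shows that the surviving configurations consist of either a single sub-path of order at most $5$ (which has a $2$-element dominating set since $\lceil 5/3\rceil =2$) or of two sub-paths whose orders sum to at most $5$ with each of order at most $3$, which together have a $2$-element dominating set.
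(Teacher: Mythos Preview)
Your argument is correct, and for $k\le 1$ it coincides with the paper's. For $k\ge 2$, however, you take a genuinely different route. The paper simply splits into three cases according to where a forced vertex lies---$\{v_1,v_2\}\cap F\ne\emptyset$, $\{v_7,v_8\}\cap F\ne\emptyset$, or $F\cap P\subseteq\{v_3,v_4,v_5,v_6\}$---and in each case writes down an explicit pair of ``anchor'' vertices ($\{v_4,v_7\}$, $\{v_2,v_5\}$, $\{v_2,v_7\}$ respectively) so that $D_P=\{\text{anchors}\}\cup(F\cap P)$ dominates $P$ with $|D_P|\le 2+|F\cap P|$. Your approach instead analyzes the leftover set $U=V(P)\setminus N_P[F\cap P]$ structurally and bounds its domination number by~$2$.

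Two remarks on your write-up. First, your justification that $P[U]$ has at most two components is slightly off: three components do \emph{not} force $v_1,v_8\in U$. The clean argument is that any internal gap (a maximal run of $N_P[F\cap P]$ with $U$ on both sides) must have length $\ge 3$---because its boundary vertices cannot be forced, so a forced vertex sits strictly inside---and then $2\cdot 3+3\cdot 1=9>8$ rules out three components outright. Second, your ``each of order at most~$3$'' in the two-component case does require checking that $|U|=5$ with two components forces $k=1$ (the single internal gap then has size exactly~$3$, pinning $F\cap P$ to a single vertex); this is the substance of the ``short case check'' you defer.

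Both approaches work; the paper's is shorter and entirely explicit, while yours isolates the combinatorial core (bounding the domination number of $P[U]$) and would generalize more readily to longer bricks.
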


\begin{proof}
 By Lemma~\ref{lem:vertices},  $\sum_{v\in P}\ch(v)\le -\frac{6}{7}|F\cap P|$.
 If $|F\cap P|\le 1$, we put $D_P = \{v_2,v_5,v_8\}$ and then $|D_P| + \sum_{v\in P}\ch(v) \le 3+|F\cap P|-\frac{6}{7}|F\cap P|=3+\frac{1}{7}|F\cap P|\le\frac{22}7$.
 
 Finally assume $|F\cap P|\ge 2$. 
 If $\{v_1,v_2\}\cap F \ne \emptyset$ we put $D_P = \{v_4,v_7\} \cup (F\cap P)$. 
 If $\{v_7,v_8\}\cap F \ne \emptyset$ we put $D_P = \{v_2,v_5\} \cup (F\cap P)$. 
 Otherwise $|\{v_3,v_4,v_5,v_6\}\cap F|\ge 2$ and we can put $D_P = \{v_2,v_7\} \cup (F\cap P)$. 
 In every of these three cases $|D_P| + \sum_{v\in P}\ch(v) \le 2+|F\cap P| -\frac{6}{7}|F\cap P| = 2 + \frac{1}{7}|F\cap P| \le \frac{22}7$. 
 \end{proof}

\subsection{0-paths}

\begin{lemma}
 Every 0-path $P$ is safe.
\end{lemma}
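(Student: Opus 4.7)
The plan is to derive the safety of $P$ from the 3-brick lemma (Lemma~\ref{lem:3brick}) together with a direct bound on the charge $\ch(P)$ contributed by the discharging rules. By \bref{B:0-path} we have $|P|=3$, so write $P=v_1v_2v_3$. The first step is to verify the hypothesis $P\cap A\subseteq\{v_2\}$ of the 3-brick lemma. If an endpoint, say $v_1$, were an acceptor, it would play the role of $y$ in Lemma~\ref{lem:cover} with $P_j=P$; but then (B2) would require both halves of $P-v_1$ to be 1-paths, and one of those halves is the empty path, which is a $0$-path. Thus $v_1,v_3\notin A$, Lemma~\ref{lem:3brick} applies, and yields a set $D_P$ satisfying conditions a), b), and $|D_P|+\sum_{v\in P}\ch(v)\le\frac{8}{7}$.

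Next I will show $\ch(P)\le\frac{1}{7}$. Among the four discharging rules, only D3 can send charge to $P$: D1 is inapplicable because endpoints of a $0$-path are never marked (the accepting procedure marks only out-endpoints of $1$-paths and of paths of order $2,5,8$); D2 is inapplicable because $P$ is not dangerous (which demands $|P|=8$); and D4 is inapplicable because $P$ is not weak (also requiring $|P|=8$). Hence $\ch(P)=d/7$, where $d$ is the number of dangling paths in $S$ that neighbor $P$. I claim $d\le1$: for any dangling neighbor $Q=uv'$, the same (B2) argument as in the previous paragraph forces the out-endpoint $v'$ of $Q$ to be adjacent to $v_2$, not to $v_1$ or $v_3$. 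If two distinct dangling neighbors $Q_i=u_iv_i'$ ($i=1,2$) existed, then the walk $u_1v_1'v_2v_2'u_2$ (whose internal vertices are pairwise distinct because the three paths $P,Q_1,Q_2$ are vertex-disjoint) would witness distance at most $4$ between the $1$-vertices $u_1$ and $u_2$, contradicting the hypothesis of Theorem~\ref{thm:main} that every pair of $1$-vertices is at distance at least $5$.

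Combining the two estimates gives
\[
|D_P|+\sch(P)=|D_P|+\ch(P)+\sum_{v\in P}\ch(v)\le\tfrac{1}{7}+\tfrac{8}{7}=\tfrac{9}{7}=\tfrac{3}{7}|P|,
\]
so $P$ is safe. The only non-routine point, as expected, is the bound $d\le1$ on the number of dangling neighbors; the remaining ingredients are the 3-brick lemma and a direct check that among the four discharging rules only D3 can feed charge into a $0$-path.
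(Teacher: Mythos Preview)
Your proof is correct and follows essentially the same route as the paper's: invoke \bref{B:0-path} to get $|P|=3$, apply the 3-brick lemma for $\frac{8}{7}$-safeness, bound $\ch(P)\le\frac{1}{7}$ by showing only D3 can contribute and at most one dangling path can neighbor $P$ (via (B2) and the distance-5 condition on 1-vertices), and combine. You are in fact slightly more thorough than the paper, since you explicitly verify the hypothesis $P\cap A\subseteq\{v_2\}$ of Lemma~\ref{lem:3brick} (the paper leaves this implicit) and spell out why D1, D2, D4 are inapplicable.
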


\begin{proof}
 By (B8) $|P|=3$. 
 Clearly, $P$ may get charge only by Rule D3. Moreover, Rule D3 applies at most once to $P$ because otherwise by (B2) there are two dangling paths neighboring with the only $(1,1)$-vertex of $P$ and then there are 1-vertices at distance 4, a contradiction.
 It follows that $\ch(P)\le \frac{1}{7}$.
 By Lemma~\ref{lem:3brick} path $P$ is $\frac{8}{7}$-safe. Since $\frac{8}7 + \ch(P) \le \frac{9}7 = \frac{3}{7}|P|$ so by Lemma~\ref{lem:safe} path $P$ is safe.
 \end{proof}

\subsection{1-paths}

\begin{lemma}
 Every path $P$ of order 1 is safe.
\end{lemma}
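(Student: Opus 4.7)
The plan is to show safeness of $P=\{v\}$ (so $|P|=1$ and $\frac{3}{7}|P|=\frac{3}{7}$) directly, by splitting into the cases in which $v$ is finally accepted or rejected. Since $G$ has no isolated vertices, $v$ has a neighbor outside $P$, so $v$ is an out-endpoint of $P\in S_1$ and is marked at the start of the accepting procedure; therefore after the procedure $v$ ends up either accepted or rejected.

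Before the case split I would establish three structural facts. First, $v\notin F$ is immediate from Observation~\ref{obs:endpoint-not-forces}. Second, $v\notin A$: if $v$ accepted some marked $u$ lying on a path $P_u$, then $u$ would be an out-endpoint of $P_u$, and $P_u$ would be either a 1-path or a 2-path of order $2$, $5$ or $8$; in either case Lemma~\ref{lem:cover} applies to $P_u$, and \bref{B:1} used with the out-endpoint $u$ and its neighbor $v$ would force the path containing $v$ (namely $P$) not to be a 1-path, contradicting $|P|=1$. Third, $P$ has no dangling neighbor: if $Q=xy$ with $\deg_G(x)=1$ were dangling and adjacent to $P$, then the only vertex of $Q$ that could be adjacent to $v$ is $y$, and (B3) applied to $P$ and $y\in Q$ would force the two pieces of $Q$ at $y$ to be 2-paths, impossible because $|Q|=2$. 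Together, the first two facts give $\ch(v)\le 0$ by Lemma~\ref{lem:vertices}, while the third, combined with the fact that D4 never sends to 1-paths, leaves only D1 and D2 as possible contributors to $\ch(P)$.

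Now I would conclude by cases. If $v$ is accepted by some $w\in A$, then D1 contributes exactly $\frac{3}{7}$ to $\ch(P)$ (since $|P|=1\in\{1,2,5,8\}$) and no other rule applies, so $\ch(P)\le\frac{3}{7}$; choosing $D_P=\emptyset$ satisfies a) (because $w$ dominates $v$) and b) (because $P\cap(A\cup F)=\emptyset$), and $|D_P|+\sch(P)\le 0+\frac{3}{7}+0=\frac{3}{7}$. If $v$ is rejected, then by Observation~\ref{obs:dangerous} $P$ has at least two dangerous neighbors and hence $\ch(P)\le -\frac{4}{7}$; choosing $D_P=\{v\}$ trivially satisfies a) and b), and $|D_P|+\sch(P)\le 1-\frac{4}{7}+0=\frac{3}{7}$. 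In both cases $P$ is safe.

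The main subtlety I expect is the proof of $v\notin A$: one must invoke Lemma~\ref{lem:cover} ``from the marked neighbor's side'', using the fact that every marked vertex is an out-endpoint of a path in $S_1$ or a $\{2,5,8\}$-order path in $S_2$, both of which satisfy the hypothesis of the lemma. Once this is in place, everything else is routine charge bookkeeping.
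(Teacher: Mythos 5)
Your proof is correct and follows essentially the same route as the paper: the same accepted/rejected case split with $D_P=\emptyset$ resp.\ $D_P=\{v\}$, the same use of Observation~\ref{obs:endpoint-not-forces}, Lemma~\ref{lem:vertices} and Observation~\ref{obs:dangerous}, and the same charge accounting. The only cosmetic difference is that you exclude a dangling neighbour via (B3) applied with $P$ as the 1-path, whereas the paper cites (B1) applied from the dangling path's side; both are valid and your expanded justification of $P\cap A=\emptyset$ via (B1) is exactly what the paper leaves implicit.
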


\begin{proof}
 Let $V(P)=\{v\}$.
 Since there are no isolated vertices, $v$ is an out-endpoint.
 By (B1) $P$ does not receive charge by Rule D3.
 
 Note that $P\cap A = \emptyset$ by (B1) and $F\cap P = \emptyset$ by Observation~\ref{obs:endpoint-not-forces}.
 Hence, by Lemma~\ref{lem:vertices}, $\ch(v) = 0$.

First assume $v$ is accepted. 
 Then $P$ gets exactly $\frac{3}{7}$ by Rule D1, and Rule D2 does not apply, so $\ch(P)=\frac{3}{7}$.
 We put $D_P = \emptyset$. 
 It follows that $|D_P| + \sch(P) = \frac{3}{7} = \frac{3}{7}|P|$.
 
 If $v$ is not accepted, Rule D1 does not apply.
 Moreover, then $P$ is rejected, so by Observation~\ref{obs:dangerous} it sends $2\cdot\frac{2}{7}=\frac{4}{7}$ by Rule D2.
 Hence, $\ch(P)\le-\frac{4}7$.
 We put $D_P = \{v\}$.
 It follows that $|D_P| + \sch(P) \le 1 - \frac{4}{7} = \frac{3}{7}|P|$.
 \end{proof}

\begin{lemma}
\label{lem:1-path-endpoint}
Every 1-path $P$, $|P|\ge 4$, with an out-endpoint is safe.
\end{lemma}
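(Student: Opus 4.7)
The plan is to decompose $P$ as $(v_1) \cdot B_1 \cdots B_k$, where $v_1$ denotes the marked and accepted out-endpoint of $P$ and each $B_i = v_{3i-1}v_{3i}v_{3i+1}$ is a $3$-brick, and then to combine the resulting bounds via Lemma~\ref{lem:safe}. Such a $v_1$ exists: since $P\in S_1$ has an out-endpoint, the accepting procedure marks exactly one of its out-endpoints, and by Observation~\ref{obs:dangerous} the path $P$ cannot be rejected (its order is $\ge 4 \ne 1$), so the marked vertex is in fact accepted by some $w\in A$.

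The main ingredient is bounding $\ch(P)$. Property \bref{B:1} has two useful consequences here: no vertex of $P$ is an acceptor, because an acceptor lying on $P$ would be a neighbor on $P$ of an out-endpoint of some $1$- or $2$-path, contradicting that $P$ is itself a $1$-path; and no dangling path can be a neighbor of $P$, since the (necessarily degree-$\ge 2$) out-endpoint of such a dangling $2$-path would again have a neighbor on the $1$-path $P$. Hence $A\cap P=\emptyset$ and Rule~D3 does not fire at $P$. Rule~D2 requires $P$ to be dangerous and Rule~D4 requires $P$ to be weak; both require $|P|=8$, which is impossible because $|P|\equiv 1\pmod 3$. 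Therefore the only contribution to $\ch(P)$ is one firing of Rule~D1 at $w$, giving $\ch(P)\le \tfrac{4}{7}$.

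For the decomposition itself, the singleton brick $(v_1)$ is $0$-safe: by Observation~\ref{obs:endpoint-not-forces} and $A\cap P=\emptyset$ we have $v_1\notin A\cup F$, so Lemma~\ref{lem:vertices} gives $\ch(v_1)=0$, and the choice $D_{(v_1)}=\emptyset$ is valid because the acceptor $w\in A$ dominates $v_1$. Each $3$-brick $B_i$ satisfies $B_i\cap A=\emptyset\subseteq\{v_{3i}\}$ and hence is $\tfrac{8}{7}$-safe by Lemma~\ref{lem:3brick}. Writing $|P|=3k+1$ with $k\ge 1$, the hypothesis of Lemma~\ref{lem:safe} becomes $0 + k\cdot\tfrac{8}{7} + \tfrac{4}{7} \le \tfrac{3}{7}(3k+1)$, i.e.\ $8k+4\le 9k+3$, which holds exactly for $k\ge 1$.

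The main subtle point, and the reason for isolating $v_1$ as its own brick rather than absorbing it into a leading $4$-brick via Lemma~\ref{lem:4brick}, is precisely to convert the ``free'' external domination of $v_1$ by its acceptor into a $\tfrac{4}{7}$ gain that exactly cancels a one-vertex arithmetic slack; a $4$-brick start would cost $2$ instead of $\tfrac{8}{7}$ and the inequality would only close for $k\ge 7$. Once this design choice is made, the only care points are the two applications of \bref{B:1}, used to kill both incoming D3 charge and any acceptors inside $P$.
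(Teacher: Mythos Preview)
Your proof is correct and follows essentially the same approach as the paper's: isolate the accepted out-endpoint as a $0$-safe singleton, cover the rest by $\tfrac{8}{7}$-safe $3$-bricks via Lemma~\ref{lem:3brick}, and close with Lemma~\ref{lem:safe}. You are in fact slightly more explicit than the paper in two places---you justify via Observation~\ref{obs:dangerous} why the marked out-endpoint must actually be accepted, and you rule out Rules~D2 and~D4 explicitly---whereas the paper just asserts $\ch(P)=\tfrac{4}{7}$ after dismissing D3.
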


\begin{proof}
 Assume $P=v_0v_1\ldots v_{3k}$ for some $k\ge 1$.
 By the accepting procedure, since $P$ has an out-endpoint, $P$ has an accepted out-endpoint and $P$ gets $\frac{4}{7}$ by D1.
 Assume w.l.o.g.\ $v_0$ is the accepted out-endpoint of $P$.
 By (B1), D3 does not apply to $P$. It follows that $\ch(P)=\frac{4}7$.
 Note that $P\cap A = \emptyset$ by (B1).
 We partition $P$ into $k+1$ paths: $P=P_0P_1\ldots P_k$, where $P_0=v_0$ and $P_i=v_{3i-2}v_{3i-1}v_{3i}$ for any $i=1,\ldots,k$.
 By Observation~\ref{obs:endpoint-not-forces} and Lemma~\ref{lem:vertices} we have $\ch(v_0)=0$, so we see that $P_0$ is $0$-safe (by choosing $D_{P_0}=\emptyset$). By Lemma~\ref{lem:3brick} for every $i=1,\ldots,k$ the path $P_i$ is $\frac{8}7$-safe.
 Since $0 + k\cdot \frac{8}7 + \ch(P) = \frac{8k+4}{7} \le \frac{9k+3}{7} = \frac{3}{7}|P|$, by Lemma~\ref{lem:safe} path $P$ is safe.  
  \end{proof}

\begin{lemma}
\label{lem:4-aux}
If a path $P\in S$ of order 4 has no out-endpoint then $P=N[x]$ for some $x\in P$. 
\end{lemma}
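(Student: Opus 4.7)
The plan is to combine the distance hypotheses of Theorem~\ref{thm:main} with a short case analysis on the degrees of $v_1$ and $v_4$, and, in the difficult sub-cases, with a cover-modification argument based on the minimality of $S$. Throughout, write $P=v_1v_2v_3v_4$. Since $P$ has no out-endpoint, $N(v_1)\cup N(v_4)\subseteq V(P)$, so $\deg(v_1),\deg(v_4)\le 3$. The conclusion $P=N[x]$ requires both the domination $V(P)\subseteq N[x]$ \emph{and} the reverse inclusion $N[x]\subseteq V(P)$, i.e., that $x$ has no neighbor outside $P$.

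First I would show that at least one ``diagonal'' edge $v_1v_3$, $v_2v_4$ is present. Suppose neither is: then $N(v_1)\subseteq\{v_2,v_4\}$ and $N(v_4)\subseteq\{v_1,v_3\}$. If $v_1v_4\in E$, both $v_1,v_4$ are adjacent $2$-vertices, contradicting the hypothesis; if $v_1v_4\notin E$, both are $1$-vertices at $P$-distance $3$, again contradicting the hypothesis. By reversing $P$ if needed I may assume $v_2v_4\in E$; then $v_2\sim v_1,v_3,v_4$ and $V(P)\subseteq N[v_2]$. A short case analysis on $(\deg v_1,\deg v_4)\in\{1,2,3\}^2$, using the distance hypotheses and $v_2v_4\in E$, shows that only $(1,2),(2,2),(2,3),(3,3)$ are possible. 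In $(2,3)$ and $(3,3)$ the endpoint of degree $3$ is adjacent (inside $P$) to every other vertex of $P$ and has no outside neighbor, so $N[v_4]=V(P)$ (resp.\ $N[v_1]=V(P)$) holds outright and we are done with $x=v_4$ (resp.\ $x=v_1$).

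The remaining cases are $(1,2)$ and $(2,2)$. In $(1,2)$ the only candidate is $x=v_2$ (since $v_3\not\sim v_1$, as $v_1$'s only neighbor is $v_2$), while in $(2,2)$ the induced subgraph $G[P]$ is $K_4$ minus $v_1v_4$, so the candidates are $x\in\{v_2,v_3\}$. In either case I must show at least one candidate has no outside neighbor. I would argue this by contradiction via the minimality of $S$ from Lemma~\ref{lem:cover}: assuming the candidate has an external neighbor $u$ lying on a path $P_u=Q_1uQ_2\in S$, construct a modified cover $S'$ by splicing that vertex into $P_u$ through the edge to $u$ while reorganizing the remaining vertices of $P$ via the diagonal $v_2v_4$. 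Typical replacements of $P,P_u$ include $\{Q_1uv_2v_1,\;v_3v_4,\;Q_2\}$ or $\{v_3v_4v_2uQ_i,\;v_1\}$.

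The main obstacle is verifying $\Phi(S')<\Phi(S)$ uniformly across all sub-cases. The most natural modification drops the first coordinate $r_1$ by $3$ when $P_u$ is a $1$-path, but not when $P_u$ is a $0$-path or $2$-path. To handle those I would exploit property~\bref{B:0-path} (forcing every $0$-path in $S$, hence $P_u$ when $P_u\in S_0$, to have order exactly $3$) to constrain $P_u$ sharply, and pick the reorganization adapted to $P_u$'s residue class $\bmod\,3$ and to the position of $u$ within $P_u$, so that some early coordinate of $\Phi$ strictly decreases in lexicographic order. In case $(2,2)$ the presence of both diagonals $v_1v_3$ and $v_2v_4$ provides extra flexibility: applying the same splicing argument symmetrically to $v_2$ and to $v_3$ yields that at least one of them has no outside neighbor, completing the proof.
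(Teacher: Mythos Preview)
Your first paragraph—ruling out the case with no diagonal via the distance hypotheses and then assuming WLOG that $v_2v_4\in E$—is exactly the paper's argument, and the paper stops right there: since $v_2\sim v_1,v_3,v_4$, take $x=v_2$. (The paper phrases the symmetric case as ``if $v_1v_3\in E(G)$ then $N(v_3)=\{v_1,v_2,v_4\}$'', but only the inclusion $\supseteq$ is actually argued, and only $V(P)\subseteq N[x]$ is ever used downstream: in the proof of Lemma~\ref{lem:path4} one merely puts $x$ into $D_P$ to dominate $P$.) So read the statement as $V(P)\subseteq N[x]$; the reverse inclusion is neither proved in the paper nor needed.

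Everything after your first paragraph is therefore unnecessary, and the cover-modification portion also has a genuine gap. Lemma~\ref{lem:cover} asserts only \bref{B:1}--\bref{B:last}; it does not say that $\Phi(S)$ is lexicographically minimal—minimality is a device inside its proof, not part of its conclusion—so exhibiting $S'$ with $\Phi(S')<\Phi(S)$ yields no contradiction with $S$ being the cover of Lemma~\ref{lem:cover}. Even granting yourself minimality, you flag the potential-decrease verification as ``the main obstacle'' and do not carry it out across the residue classes of $P_u$. And in fact nothing in \bref{B:1}--\bref{B:last} or in the hypotheses of Theorem~\ref{thm:main} prevents an \emph{inner} vertex of a $4$-path in $S$ from having a neighbour outside $P$, so the literal equality $N[x]=V(P)$ you are chasing need not hold for the cover as specified.
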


\begin{proof}
Let $P=v_1v_2v_3v_4$.
Since $P$ has no out-endpoint, $N(v_1)\subseteq\{v_2,v_3,v_4\}$ and $N(v_4)\subseteq\{v_1,v_2,v_3\}$.
If $v_1v_3\in E(G)$ then $N(v_3)=\{v_1,v_2,v_4\}$, so we can take $v_3$ as $x$.
Hence $v_1v_3\not\in E(G)$ and by symmetry also $v_4v_2\not\in E(G)$.
If $v_1v_4\in E(G)$ then $\deg_G(v_1)=\deg_G(v_4)=2$ and we have 2-vertices at distance 1, a contradiction.
It follows that $\deg_G(v_1)=\deg_G(v_4)=1$, and we have 1-vertices at distance 3, a contradiction.
 \end{proof}

\begin{lemma}
\label{lem:path4}
 Every path $P$ of order 4 is safe.
\end{lemma}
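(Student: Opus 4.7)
The plan is to split into two subcases by whether $P$ has an out-endpoint. Since $|P|=4 \equiv 1 \pmod 3$, the path $P$ lies in $S_1$; if it has an out-endpoint then Lemma~\ref{lem:1-path-endpoint} immediately gives safeness, so the remaining (and main) subcase is that $P$ has no out-endpoint, where Lemma~\ref{lem:4-aux} provides a vertex $x \in P$ with $P = N[x]$.

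Before computing the charge I would note that $P \cap A = \emptyset$ by \bref{B:1} (every accepted vertex is an out-endpoint of a 1-path or 2-path, and \bref{B:1} forbids its acceptor from sitting on a 1-path). Then only Rule D3 can contribute to $\ch(P)$: D1 requires an accepted endpoint (none, since $P$ has no out-endpoint), while D2 and D4 would require $P$ to be dangerous or weak, both of which force $|P|=8$.

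The central combinatorial step, and the one I expect to be the main obstacle, is bounding the number $d$ of dangling paths neighboring $P$. Since $v_1$ and $v_4$ have no out-neighbors, the non-leaf vertex of each such dangling path attaches at $v_2$ or $v_3$. If two distinct dangling paths attached to the same $v_i \in \{v_2,v_3\}$, their leaves would lie at distance at most $4$, contradicting the hypothesis of Theorem~\ref{thm:main} that pairs of 1-vertices are at distance at least $5$. Hence $d \le 2$ and $\ch(P) \le \frac{2}{7}$.

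To finish, I would take $D_P := \{x\} \cup (F \cap P)$. Conditions a) and b) of safeness are immediate from $P = N[x]$ and the construction of $D_P$. For c), Observation~\ref{obs:endpoint-not-forces} gives $v_1, v_4 \notin F$, so $f := |F \cap P| \le 2$; combining $|D_P| \le 1 + f$ with $\sum_{v \in P} \ch(v) \le -\frac{6}{7} f$ (Lemma~\ref{lem:vertices}) and $\ch(P) \le \frac{2}{7}$ yields the routine bound $|D_P| + \sch(P) \le \frac{9}{7} + \frac{f}{7} \le \frac{11}{7} < \frac{12}{7} = \frac{3}{7}|P|$.
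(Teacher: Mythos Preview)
Your proof is correct and follows essentially the same structure as the paper's. The one simplification you overlooked is that Rule D3 cannot apply at all: the non-leaf endpoint of a dangling path is an out-endpoint of a 2-path, and \bref{B:not-1} forbids such a vertex from having any neighbor on a 1-path, so in fact $d=0$ and $\ch(P)=0$, making your distance argument for $d\le 2$ unnecessary (though it is sound and still yields the required inequality).
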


\begin{proof}
 Let $P=v_1v_2v_3v_4$.
 By Lemma~\ref{lem:1-path-endpoint} we can assume that $P$ has no out-endpoint.
 Then $\ch(P) = 0$, since D3 does not apply by (B1).
 Let $x\in P$ be a vertex such that $P=N[x]$, as guaranteed by Lemma~\ref{lem:4-aux}. 
 Note that $P\cap A = \emptyset$ by (B1).
 
 We put $D_P = \{x\} \cup (F\cap P)$. By Lemma~\ref{lem:vertices} and Observation~\ref{obs:endpoint-not-forces}, $\ch(v_1)=\ch(v_4)=0$ and $\ch(v_2),\ch(v_3)\le -\frac{6}{7}[v\in F]$.
 By Observation~\ref{obs:endpoint-not-forces}, $|F\cap P|\le 2$.
 Hence, $|D_P| + \sch(P) \le 1 + |F \cap P| -\frac{6}{7}|F\cap P| = 1+\frac{1}{7}|F\cap P|\le \frac{9}{7} < \frac{3}{7}|P|$.
 \end{proof}

\begin{lemma}
\label{lem:path7}
 Every 1-path $P$ of order at least 7 is safe.
\end{lemma}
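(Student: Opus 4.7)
My plan is to first reduce to the case where $P$ has no out-endpoint, then show that $P$ receives no charge at all from the discharging rules, and finally decompose $P$ into bricks and apply the brick lemmas.

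If $P$ has an out-endpoint, Lemma~\ref{lem:1-path-endpoint} already gives safeness, so I may assume $P$ has no out-endpoint and write $P=v_1\ldots v_{3k+1}$ with $k\ge 2$.

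Next I would verify that $\ch(P)=0$ and $P\cap A=\emptyset$. Rule~D1 requires an accepted out-endpoint of $P$, which does not exist. Rules~D2 and~D4 only deliver charge to paths of order $8$ (dangerous or weak paths, respectively), whereas our $P$ is a $1$-path and so its order is never $8$. The delicate case is Rule~D3: I need to show that no dangling path neighbors $P$. Here I would use \bref{B:1}. A dangling path is a $2$-path of order $2$; its out-endpoint has, by \bref{B:1}, no neighbor inside any $1$-path, and its other vertex is of degree $1$ in $G$ and hence has no out-neighbors. Thus no vertex of a dangling path is adjacent to $P$, so D3 cannot fire and $\ch(P)=0$. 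The same appeal to \bref{B:1} shows $P\cap A=\emptyset$, since every acceptor is a neighbor of an out-endpoint of a $1$- or $2$-path.

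For the dominating set I would split $P$ into the 7-brick $v_1\ldots v_7$ followed by $k-2$ consecutive 3-bricks $v_{3j+5}v_{3j+6}v_{3j+7}$ for $j=1,\ldots,k-2$. Since $P\cap A=\emptyset$, the hypotheses on acceptors in Lemmas~\ref{lem:3brick} and~\ref{lem:7brick} are vacuous, so the 7-brick is $3$-safe and each 3-brick is $\frac{8}{7}$-safe. Summing and adding $\ch(P)=0$ gives
$$3+(k-2)\cdot\tfrac{8}{7}+0 \;=\; \tfrac{8k+5}{7} \;\le\; \tfrac{9k+3}{7} \;=\; \tfrac{3}{7}|P|,$$
since $k\ge 2$. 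Lemma~\ref{lem:safe} then concludes that $P$ is safe.

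The only non-routine point I anticipate is the exclusion of Rule~D3, which must be handled carefully by chasing through the definition of a dangling path and invoking \bref{B:1}; the rest is a routine combination of the brick lemmas and the inequality $8k+5\le 9k+3$.
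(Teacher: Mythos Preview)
Your proof is correct and follows essentially the same route as the paper: reduce to the no-out-endpoint case via Lemma~\ref{lem:1-path-endpoint}, use \bref{B:1} to conclude $\ch(P)=0$ (and $P\cap A=\emptyset$), then decompose into one $7$-brick and the remaining $3$-bricks and apply Lemmas~\ref{lem:7brick}, \ref{lem:3brick}, and~\ref{lem:safe}. The only difference is that you spell out in more detail why D1--D4 do not send charge to $P$, whereas the paper dispatches this in one line.
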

\begin{proof}
 By Lemma~\ref{lem:1-path-endpoint} we can assume that $P$ has no out-endpoints. 
 By (B1), D3 does not apply to $P$. It follows that $\ch(P)=0$.

 Then we partition $P=P_0\ldots P_{(|P|-7)/3}$ where $P_0$ is of order $7$ and for every $i=1,\ldots,\frac{|P|-7}{3}$ path $P_i$ is of order 3.
 By Lemma~\ref{lem:7brick} $P_0$ is $3$-safe and by Lemma~\ref{lem:3brick} for $i=1,\ldots,\frac{|P|-7}{3}$ path $P_i$ is $\frac{8}{7}$-safe.
 Since $3+\frac{|P|-7}{3}\times \frac{8}7=\frac{8|P|+7}{21}=\frac{3}{7}|P|+\frac{7-|P|}{21}\le \frac{3}{7}|P|$ so by Lemma~\ref{lem:safe} path $P$ is safe.
 \end{proof}

\subsection{2-paths}

\ignore{
Let $Q$ be the set of neighbors of dangling paths.

\begin{lemma}
\label{lem:S_2}
Let $P \in S_2$ and let $v\in P$.
If $v$ is an acceptor then $\ch(v) \le -\frac{3}{7}+\frac{1}{7}[v\in Q]-\frac{6}{7}[v\in F]$.
If $v$ is an acceptor that accepts a 1-path of order at least 4, then $\ch(v) \le -\frac{4}{7}+\frac{1}{7}[v\in Q]-\frac{6}{7}[v\in F]$.
If $v$ is a $(2,2)$-vertex but not an acceptor then $\ch(v) \le \frac{1}{7}[v\in Q]-\frac{6}{7}[v\in F]$.
If $v$ is an endpoint then $\ch(v)=\frac{3}{7}$ if $v$ is accepted and $\ch(v)=0$ otherwise.
Otherwise i.e.\ when $v$ is an inner vertex of $P$ but not a $(2,2)$-vertex, then $\ch(v) \le -\frac{6}{7}[v\in F]$.
\end{lemma}

\begin{proof}
The claim follows easily from rules D1, D3 and D4. 
Again, Rule D3 applies at most once to $v$ because otherwise there are 1-vertices at distance 4.
From (B2), $v$ gets $\frac{1}{7}$ of charge by rule D3 only if $v$ is a $(2,2)$-vertex (and an acceptor is always a $(2,2)$-vertex). 
Endpoints are not forced by Observation~\ref{obs:endpoint-not-forces} so they do not send charge by D4.
  \end{proof}
}

\begin{lemma}
\label{lem:dangling}
 If a 2-path $P$ has a neighboring dangling path then $|P|\in\{11,17\}$ and $P$ has exactly one neighboring dangling path.
\end{lemma}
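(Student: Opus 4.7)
The plan is to apply property \bref{B:dang} from Lemma~\ref{lem:cover} together with the distance-$5$ condition on degree-$1$ vertices from Observation~\ref{obs:rules}. I would first set up notation: let $P_i = vx$ be a dangling path neighboring $P$, with $v$ the degree-$1$ endpoint and $x$ the out-endpoint; let $y\in P$ be adjacent to $x$, and write $P=P'yP''$. By (B3) both $P'$ and $P''$ are $2$-paths, so \bref{B:dang} leaves exactly two possibilities: either one of $P',P''$ is a dangling subpath (a subpath of $G$ of order $2$ whose one endpoint has degree $1$), or $|P|\in\{11,17\}$ and $|P'|=|P''|$.

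The key step is to rule out the dangling-subpath alternative. I would suppose $P'=u_1u_2$ is dangling with $u_2$ adjacent to $y$. Then $u_2$ has a neighbor outside $P'$, so its degree in $G$ is at least $2$, and the dangling condition forces $u_1$ to have degree $1$. But then $u_1$ and $v$ are both degree-$1$ vertices, and the walk $u_1,u_2,y,x,v$ of length $4$ witnesses $\mathrm{dist}_G(u_1,v)\le 4$, contradicting Observation~\ref{obs:rules}. This forces $|P|\in\{11,17\}$ with $|P'|=|P''|$, giving the first part of the lemma.

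For uniqueness, I would assume $P$ has two distinct neighboring dangling paths $P_i^{(1)},P_i^{(2)}$, with out-endpoints $x_1,x_2$ adjacent to $y_1,y_2\in P$ respectively. Applying the preceding step to each $y_a$ forces $y_a$ to split $P$ into two halves of equal size; since $|P|\in\{11,17\}$, the middle vertex of $P$ is unique, so $y_1=y_2=:y$. The vertices $x_1,x_2$ lie in different paths of the cover $S$ and are therefore distinct, and then the degree-$1$ endpoints $v_1,v_2$ of $P_i^{(1)},P_i^{(2)}$ are joined by the walk $v_1,x_1,y,x_2,v_2$ of length $4$, again contradicting Observation~\ref{obs:rules}.

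The main obstacle is minor: property \bref{B:dang} has already done the essential structural work, so the argument reduces to translating the phrase ``dangling subpath'' into a pair of degree-$1$ vertices at distance at most $4$ and exhibiting the short walk. The only mildly delicate point is in the uniqueness step, where one must observe that for $|P|\in\{11,17\}$ there is a single vertex of $P$ splitting it into two equal halves; this is immediate arithmetic but worth stating explicitly so that $y_1=y_2$ really follows.
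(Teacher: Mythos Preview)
Your proposal is correct and follows essentially the same approach as the paper's proof: both invoke \bref{B:dang} (together with (B3)) to reduce to the dangling-subpath alternative, rule it out via a length-$4$ walk between two degree-$1$ vertices, and then argue uniqueness by observing that the attachment point on $P$ is forced to be the unique middle vertex, whence two dangling neighbors would again yield degree-$1$ vertices at distance~$4$. The paper's write-up is terser (it indexes $P=v_1\cdots v_{3k+2}$ and reads off $i=3$ or $i=3k$ directly), but the content is the same.
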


\begin{proof}
 Let $P=v_1\ldots v_{3k+2}$ and for some $i$ vertex $v_i\in P$ has a neighbor on a dangling path.
 
 Assume $|P|\not\in\{11,17\}$.
 Then by (B5) $i=3$ and $v_1v_2$ is a dangling path or $i=3k$ and $v_{3k+1}v_{3k+2}$ is a dangling path.
 In both cases there are 1-vertices at distance 4, a contradiction. 
 
 Hence  $|P|\in\{11,17\}$ and by (B5) $i=6$ when $|P|=11$ and $i=9$ when $|P|=17$.
 If $v_i$ has at least two neighbors on dangling paths then there are 1-vertices at distance 4, a contradiction. 
 \end{proof}

\begin{lemma}
 Every path $P$ of order 2 is safe.
\end{lemma}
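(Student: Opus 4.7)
The plan is a short case analysis depending on whether $P=v_1v_2$ is a dangling path. Before splitting, I would first eliminate the degenerate subcase where both $v_1$ and $v_2$ have degree $1$ in $G$: then they would be a pair of $1$-vertices at distance $1$, violating the assumption of Theorem~\ref{thm:main}. So $P$ has at least one out-endpoint. I would then catalogue which discharging rules can affect $P$. By (B3) every acceptor inside a $2$-path must be a $(2,2)$-vertex, but $P$ has only two vertices, both endpoints, so $P\cap A=\emptyset$. Observation~\ref{obs:endpoint-not-forces} gives $F\cap P=\emptyset$, and Lemma~\ref{lem:vertices} then yields $\ch(v_1)=\ch(v_2)=0$. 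Since $|P|=2\ne 8$, $P$ is neither dangerous nor weak, so Rule D4 never touches $P$; and by Lemma~\ref{lem:dangling} no dangling path neighbors $P$ (as $|P|\notin\{11,17\}$), so D3 contributes nothing to $\ch(P)$ on the receiving side. The only remaining sources are Rule D1 (when the endpoints are accepted) and Rule D3 (when $P$ itself is dangling).

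In the first case, $P$ has two out-endpoints. Both are marked by the accepting procedure, and by Observation~\ref{obs:dangerous} rejected paths have order $1$, so neither endpoint of $P$ is rejected: both get accepted. Thus $P$ receives $\tfrac{3}{7}$ twice by D1 and sends nothing, giving $\ch(P)=\tfrac{6}{7}$. Taking $D_P=\emptyset$ suffices, because both vertices lie in $N[A]$, and $|D_P|+\sch(P)=\tfrac{6}{7}=\tfrac{3}{7}|P|$. In the second case $P$ is a dangling path; say $v_2$ is the out-endpoint and $v_1$ has degree $1$ in $G$. The accepting procedure marks nothing on $P$ (2-paths of order $2$ are marked only when they have two out-endpoints), so D1 is inert. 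However, as a dangling path $P$ sends $\tfrac{1}{7}$ by D3 to each neighboring path, and since $v_2$ has an outside neighbor there is at least one such path, so $\ch(P)\le-\tfrac{1}{7}$. Setting $D_P=\{v_2\}$ dominates both vertices and yields $|D_P|+\sch(P)\le 1-\tfrac{1}{7}=\tfrac{6}{7}=\tfrac{3}{7}|P|$.

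The delicate point I would flag as the main obstacle is the dangling subcase: since $P\cap(A\cup F)=\emptyset$ and $v_1$ has no outside neighbor in $G$, at least one vertex of $P$ is forced into $D_P$, so the bound only just closes thanks to the single $\tfrac{1}{7}$ that a dangling path is obliged to donate by Rule D3. The one thing to verify carefully there is that this donation actually takes place, which is immediate because $v_2$ being an out-endpoint means it has a neighbor on a path different from $P$.
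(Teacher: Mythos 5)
Your proof is correct and follows essentially the same route as the paper's: the same case split into the two-out-endpoint case and the dangling case, the same charge bookkeeping (no charge received via D2--D4, $\frac{3}{7}$ twice via D1 in the first case, at least $\frac{1}{7}$ sent via D3 in the second), and the same choices of $D_P$. The only, harmless, deviation is that you justify that both marked endpoints become accepted via Observation~\ref{obs:dangerous} (rejected paths have order 1), whereas the paper rules out a neighboring dangerous path using (B\ref{B:not-8}); both arguments are valid.
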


\begin{proof}
 Let $P=v_1v_2$.
 Since there are no adjacent 1-vertices, at least one endpoint of $P$, say $v_1$ by symmetry, is an out-endpoint.
 Note that $P\cap A = \emptyset$ by (B3) and $F\cap P = \emptyset$ by Observation~\ref{obs:endpoint-not-forces}.
 Hence, by Lemma~\ref{lem:vertices}, $\ch(v_1) = \ch(v_2) = 0$.
 By Lemma~\ref{lem:dangling} $P$ does not get charge by D3.

 First assume $v_2$ is also an out-endpoint. 
 Then by (B4) $P$ does not have a neighboring path of order 8, and in particular it does not neighbor with a dangerous path, so both out-endpoints are accepted. Then we put $D_P = \emptyset$. 
 Path $P$ gets $2\times \frac{3}{7}=\frac{6}{7}$ by D1, so $\ch(P)=\frac{6}{7}$.
 It follows that $|D_P| + \sch(P) = \frac{6}{7} = \frac{3}{7}|P|$.
 
 If $v_2$ is not an out-endpoint, then $\deg_G(v_2)=1$ and $P$ is a dangling path. 
 Since $P$ has just one out-endpoint and $P$ is not accepting, D1 does not send charge to $P$.
 However, $P$ sends at least $\frac{1}{7}$ by D3 and hence $\ch(P)\le -\frac{1}7$.
 We put $D_P = \{v_1\}$.
 It follows that $|D_P| + \sch(P) = 1 - \frac{1}{7} = \frac{3}{7}|P|$.
 \end{proof}

\begin{lemma}
  Every path $P$ of order 5 is safe.
\end{lemma}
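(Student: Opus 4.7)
Write $P=v_1v_2v_3v_4v_5$. The plan is first to locate the acceptor and forced vertices inside $P$, then to pin down $\ch(P)$, and finally to run a small case split on the number of out-endpoints of $P$.

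By property~(B3), any vertex of $P$ which accepts must be a $(2,2)$-vertex, and the only $(2,2)$-vertex of a path of order $5$ is $v_3$; hence $P\cap A\subseteq\{v_3\}$. By Observation~\ref{obs:endpoint-not-forces} the endpoints $v_1,v_5$ are not forced, so $F\cap P\subseteq\{v_2,v_3,v_4\}$. For the charge on $P$ itself: Rule~D3 does not apply, because by Lemma~\ref{lem:dangling} a $2$-path with a dangling neighbor has order $11$ or $17$; Rule~D2 does not fire from $P$ either, since rejected paths have order~$1$ by Observation~\ref{obs:dangerous}. Therefore $\ch(P)=\tfrac{3}{7}a$, where $a$ is the number of accepted endpoints of $P$.

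The case of $P$ having no out-endpoint is impossible, because $v_1,v_5$ would then be two $1$-vertices at distance~$4$. In Case~A (both $v_1,v_5$ are out-endpoints), both endpoints are marked initially; since no dangerous path can have a marked neighbor on a path of order~$5$ (condition~(v) of a dangerous path requires that neighbor to sit on an order-$1$ path), and since (B4) forbids any order-$8$ path to neighbor an out-endpoint of the $2$-path $P$, the accepting procedure must accept both $v_1$ and $v_5$, giving $a=2$ and $\ch(P)=\tfrac{6}{7}$. In Case~B, say $v_5$ has degree~$1$; then $v_1$ is marked iff $P$ accepted some vertex iff $v_3\in A$, and the same argument as in Case~A shows that if $v_1$ ever becomes marked, it must be accepted. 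So Case~B splits into the subcase $v_3\notin A$ with $a=0$ and the subcase $v_3\in A$ with $a=1$.

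In every situation I will choose $D_P$ to contain $F\cap P$ together with $v_3$ whenever $v_3\in A$, augmenting by whichever of $v_1,v_2$ (respectively $v_4,v_5$) is needed to dominate the endpoint not already covered by $A\cup F$ through an external acceptor or a forced vertex. A routine enumeration of the eight subsets $F\cap P\subseteq\{v_2,v_3,v_4\}$ crossed with the three remaining degrees of freedom (Case~A versus Case~B, and whether $v_3\in A$) then verifies the bound $|D_P|+\sch(P)\le\tfrac{15}{7}=\tfrac{3}{7}|P|$ required for safeness. The main obstacle is this bookkeeping: several instances (for example Case~B with $v_3\in A$ and $F\cap P=\{v_2\}$, or Case~B with $v_3\notin A$ and $F\cap P=\{v_3\}$) attain the bound exactly, so the choice of $D_P$ must be made carefully; the key observation that makes everything balance is that a forced $v_3$ already dominates $v_2,v_3,v_4$ for free, and thus it replaces rather than supplements the ``central'' $D_P$-vertex.
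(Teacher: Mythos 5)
Your setup (acceptor only possibly at $v_3$ by (B3), $F\cap P\subseteq\{v_2,v_3,v_4\}$, and $\ch(P)=\tfrac37\cdot\#\{\text{accepted endpoints}\}$ since D2--D4 cannot touch an order-5 path) is correct and matches the paper, as does your Case~A. But there is a genuine gap in the case split: you claim that ``$P$ has no out-endpoint'' is impossible because then $v_1,v_5$ would be $1$-vertices at distance~4, and in Case~B you assume the non-out-endpoint endpoint has degree~1. This confuses ``out-endpoint'' with ``degree at least 2'': an endpoint fails to be an out-endpoint as soon as all its neighbors lie \emph{on} $P$, so $v_1$ may have degree $\ge 2$ via chords into $P$ (e.g.\ $v_1v_3\in E$ or, a priori, $v_1v_4\in E$) while $P$ has no out-endpoint at all. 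Consequently the configurations the paper treats in its Cases~2 and~3 --- $P$ accepting or non-accepting with at most one out-endpoint and an endpoint that is neither accepted nor of degree~1 --- are never examined by your enumeration, and they are exactly where the bound is tight: if $v_3\in A$ and neither endpoint is accepted, you only have $-\tfrac37$ of credit from $v_3$, so $D_P$ may contain at most two non-forced vertices; if each endpoint needed its own private dominator you would land at $3-\tfrac37=\tfrac{18}7>\tfrac{15}7$.

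The missing ingredient is property (B6) of the cover, which you never invoke. When $v_3$ is an acceptor, (B6) forbids chords from $v_1$ into $\{v_4,v_5\}$ and from $v_5$ into $\{v_1,v_2\}$, so an endpoint of degree $\ge 2$ that is not an accepted out-endpoint must be adjacent to $v_3$; combined with the fact that at most one endpoint can have degree~1 (that is where the distance-5 condition is actually used), this is what lets the paper take $D_P=\{v_3,v_4\}\cup(P\cap F)$ of size $\le 2+|F\cap\{v_2\}|$ in the accepting case, and $D_P=\{v_2,v_4\}\cup(P\cap F)$ in the non-accepting case. A smaller issue: the ``routine enumeration'' over subsets of $F\cap P$ is asserted rather than carried out, and your recipe for $D_P$ does not explicitly guarantee that $v_3$ itself is dominated when $v_3\notin A\cup F$; but these would be easy to repair, whereas the out-endpoint/degree confusion breaks the case analysis and must be fixed along the lines above.
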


\begin{proof}
  Let $P=v_1v_2v_3v_4v_5$.
  
  By Observation~\ref{obs:endpoint-not-forces}, (B3) and Lemma~\ref{lem:vertices} we have $\ch(v_1)=\ch(v_5)=0$. 
  
  By Lemma~\ref{lem:dangling} $P$ does not get charge by D3.
  
  \case{1} Both $v_1$ and $v_5$ are out-endpoints.
  Then $P$ gets $2\times \frac{3}{7}=\frac{6}{7}$ by D1, so $\ch(P)=\frac{6}{7}$.
  We partition $P$ into three paths: $P=P_1P_2P_3$, where $P_1=v_1$, $P_2=v_2v_3v_4$ and $P_3=v_5$.
  Recall that $\ch(v_1)=\ch(v_5)=0$. 
  Moreover both $v_1$ and $v_5$ have a neighbor in $A$ so we see that $P_1$ and $P_3$  are $0$-safe (by choosing $D_{P_1}=D_{P_3}=\emptyset$). 
  By Lemma~\ref{lem:3brick} path $P_2$ is $\frac{8}7$-safe.
  Since $0 + \frac{8}7 + 0 +\ch(P) = \frac{14}{7} < \frac{3}{7}|P|$, by Lemma~\ref{lem:safe} path $P$ is safe.  

  \case{2} $P$ is accepting.
  We can assume that $P$ has at most one out-endpoint for otherwise Case 1 applies. 
  Then $P$ gets $\frac{3}{7}$ by D1, so $\ch(P)=\frac{3}{7}$. 
  Also, $A\cap P=\{v_3\}$ by (B3).
  Since 1-vertices are at distance at least 5, $v_1$ or $v_5$ is of degree at least 2, say w.l.o.g.\ $\deg(v_1)\ge 2$.
  By (B6), $v_1$ is an out-endpoint (which must be accepted by Observation~\ref{obs:dangerous}) or $v_1v_3\in E$. Hence, we put $D_P=\{v_3,v_4\}\cup(P\cap F)$ and the conditions a) and b) of the definition of safeness hold.
  Recall that $\ch(v_1)=\ch(v_5)=0$. 
  By Lemma~\ref{lem:vertices} we have $\ch(v_2),\ch(v_4)\le-\frac{6}{7}[v\in F]$ and $\ch(v_3)\le-\frac{3}{7}-\frac{6}{7}[v\in F]$. Hence, $\sum_{v\in P}\ch(v)\le-\frac{3}{7} - \frac{6}{7}|P\cap F|$ and consequently $\sch(P)\le- \frac{6}{7}|P\cap F|$.
  It follows that $|D_P| + \sch(P) \le 2 + |P\cap F \setminus\{v_3,v_4\}| - \frac{6}{7}|P\cap F|\le 2+\frac{1}{7}|P\cap F \setminus\{v_3,v_4\}| \le 2+\frac{1}7= \frac{3}7|P|$.
  
  \case{3} $P$ is non-accepting.
  We can assume that $P$ has at most one out-endpoint for otherwise Case 1 applies. 
  However, then $P$ has no out-endpoints because a 2-path can have exactly one out-endpoint only if it is accepting.
  Hence $\ch(P)=0$.
  Then we put $D_P=\{v_2,v_4\} \cup (P\cap F)$. Then $\sch(P)=\sum_{v\in P}\ch(v)\le-\frac{6}{7}|P\cap F|$.
  It follows that $|D_P| + \sch(P) \le 2 + |P\cap F \setminus\{v_2,v_4\}| - \frac{6}{7}|P\cap F|\le 2+\frac{1}{7}|P\cap F \setminus\{v_2,v_4\}| \le 2+\frac{1}7= \frac{3}7|P|$.
 \end{proof}

\begin{lemma}
\label{lem:path8}
 Every path $P$ of order 8 is safe.
\end{lemma}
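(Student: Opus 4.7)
The plan is a case analysis on the status of $P$ in the accepting and forcing procedures: $P$ is either dangerous, weak, or neither.

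\textbf{Dangerous case.} Conditions $(i)$--$(vi)$ force $P\cap A = \emptyset$ and give $P$ exactly one rejected 1-path neighbor, so the only rule contributing to $\ch(P)$ is D2, yielding $\ch(P) = \frac{2}{7}$ (Lemma~\ref{lem:dangling} blocks D3, and $P$ is not weak so D4 does not apply). The 8-brick Lemma~\ref{lem:8brick} then gives $\frac{22}{7}$-safeness, and $\frac{22}{7}+\frac{2}{7}=\frac{24}{7}=\frac{3}{7}|P|$ closes this case via Lemma~\ref{lem:safe}.

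\textbf{Weak case.} Number $P=v_1\cdots v_8$ using the invariant so that $v_3$ is the sole acceptor with $\deg_G(v_5)\ge 3$, and recall that the accepted vertex $v$ satisfies $vv_3,vv_6\in E$. If the forcing procedure processes $P$, then Rule~D4 gives $\ch(P)=\frac{6}{7}$ and the forced $x\in F\setminus P$ dominates $v_5$ externally; I would take $D_P=\{v_2,v_3,v_7\}\cup(P\cap F)$ and verify domination using the coverage by $v_3$, $v_7$, $v_2$ on $P$ and by $x$ on $v_5$, then apply Lemma~\ref{lem:vertices} (giving $\ch(v_3)\le-\frac{3}{7}$ and $\ch(v_i)\le-\frac{6}{7}[v_i\in F]$ for the remaining vertices) to confirm $|D_P|+\sch(P)\le\frac{24}{7}$. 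If instead the procedure skips $P$ because some $v_i\in P$ was already forced by another weak path, then $\ch(P)=0$ but the extra $-\frac{6}{7}$ of charge per forced vertex in $P$ offsets its mandatory inclusion in $D_P$.

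\textbf{Remaining case.} $P$ is neither dangerous nor weak. When $P\cap A=\emptyset$, the 8-brick Lemma directly gives $\frac{22}{7}$-safeness; any D1 receipts of $P$ come from accepted out-endpoints of $P$, which are then dominated externally by their acceptors, so $D_P$ can omit them and fit within the tighter budget. When $P\cap A\ne\emptyset$, by (B3) the acceptors lie in $\{v_3,v_6\}$; decompose $P$ into two 4-bricks and apply Lemma~\ref{lem:4brick} to each, combined with the D1 contributions from accepted out-endpoints.

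The main obstacle is the weak case when $P\cap F$ is nonempty due to forcings by other weak paths: condition~(b) pulls these vertices into $D_P$ and tightens the budget to $|D_P|\le 3+\frac{6}{7}|P\cap F|$. Verifying this bound in all sub-configurations relies both on the chord restrictions for $v_1,v_8$ implied by (B6) and on the distance-at-least-5 constraint on 1-vertices from the reduction rules, which together rule out the chord-less pathological setups where dominating $v_1$ and $v_8$ would otherwise demand a fourth vertex in $D_P$.
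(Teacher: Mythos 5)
Your top-level trichotomy (dangerous / weak / neither) is legitimate, and your dangerous case and the $P\cap A=\emptyset$ / both-endpoints-accepted cases do match the paper's Cases 1--2 (though the assertion that a dangerous path has $P\cap A=\emptyset$ deserves the one-line justification that its marked neighbor lies on a 1-path, hence was marked from the start, so $P$ was never selected as an acceptor). The genuine gap is in your ``remaining case'' when $|P\cap A|=1$ and $P$ has no accepted endpoint (and is not weak). There your plan -- two 4-bricks plus ``D1 contributions'' -- simply does not close: the brick containing the acceptor is only $\frac{12}{7}$-safe (Lemma~\ref{lem:4brick}, using $v_1,v_8\notin F$) and the acceptor-free brick only $2$-safe, giving $\frac{26}{7}>\frac{24}{7}=\frac{3}{7}|P|$, and since no endpoint is accepted there is no D1 income to speak of. This is exactly where the paper spends most of its proof (its Case 5): it gains the missing $\frac{2}{7}$ either from a forced vertex in $P\setminus A$ (whose $-\frac{6}{7}$ pays for its inclusion in $D_P$ and simultaneously kills D4), or from extra negative charge at the acceptor $v_3$ when it accepts two vertices, accepts an endpoint of a 1-path of order at least $4$, or is itself forced ($\ch(v_3)\le-\frac{4}{7}$, allowing $D_P=\{v_2,v_3,v_5,v_7\}$), or -- when the accepted vertex has degree 1, so $P$ is not weak -- from the structural fact that $P$ has no out-endpoints (via Observation~\ref{obs:dangerous}), so $v_1$ has a second neighbor on $P$ and a $3$-element $D_P$ suffices. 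None of these ideas appears in your sketch, and without them this case fails.

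In the weak case your dichotomy ``the forcing procedure processes $P$'' versus ``skips $P$'' also misses a subcase: $P$ may be processed while $v_5$ has no neighbor outside $P$, in which case nothing is forced, $\ch(P)=0$, and there is no external dominator for $v_5$. The paper handles this using Invariant~\ref{inv} ($\deg(v_5)\ge 3$) together with (B6) (excluding $v_5v_8\in E$) to conclude $v_5$ has a neighbor in $\{v_1,v_2,v_3,v_7\}$, and it must adapt $D_P$: your fixed choice $\{v_2,v_3,v_7\}$ fails to dominate $v_5$ when its third neighbor is $v_1$, which is why the paper switches to $\{v_1,v_3,v_7\}$ there. Your closing remark invokes (B6) and the distance-$5$ condition only for the ``$P\cap F\ne\emptyset$ inside a weak path'' budget, which is the wrong place: in the paper that situation is disposed of before weakness is considered (a forced vertex in $P\setminus A$ already balances the books), whereas (B6) is needed precisely for the charge-free weak subcase above.
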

\begin{proof}
Let $P=v_1\ldots v_8$.
By Lemma~\ref{lem:dangling} path $P$ does not get charge by D3.

\case{1} $P$ has both endpoints accepted.
It follows that $P$ gets $2\times \frac{3}{7}=\frac{6}{7}$ by D1 and at most $\frac{2}{7}$ by D2.
Moreover $P$ is not weak because it has out-endpoints so $P$ does not get charge by D4.
Hence, $\ch(P)\le\frac{8}{7}$.
  We partition $P$ into four paths: $P=P_1P_2P_3P_4$, where $|P_1|=|P_4|=1$, $|P_2|=|P_3|=3$.
  By Lemma~\ref{lem:vertices} $\ch(v_1)=\ch(v_8)=0$. 
  Moreover both $v_1$ and $v_8$ have a neighbor in $A$ so we see that $P_1$ and $P_4$  are $0$-safe (by choosing $D_{P_1}=D_{P_4}=\emptyset$). 
  By Lemma~\ref{lem:3brick} both paths $P_3$ and $P_4$ are $\frac{8}7$-safe.
  Since  $0 + 2\times\frac{8}7 + 0 +\ch(P) = \frac{24}{7} = \frac{3}{7}|P|$, by Lemma~\ref{lem:safe} path $P$ is safe.  

\case{2} $P\cap A = \emptyset$.
Then, according to the accepting procedure, if $P$ has an accepted endpoint then $P$ has two accepted endpoints and Case 1 applies.
Hence we can assume $P$ has no accepted endpoints and it does not get charge by D1. Since $P\cap A =\emptyset$, path $P$ is not weak so D4 does not send charge to $P$ as well.
Then $P$ can get charge only by D2 so $\ch(P)\le\frac{2}{7}$.
By Lemma~\ref{lem:8brick} $P$ is $\frac{22}7$-safe.
Since $\ch(P)+\frac{22}7\le \frac{24}7 = \frac{3}7|P|$, so by Lemma~\ref{lem:safe} path $P$ is safe.

\case{3} $|P\cap A|=2$.
By (B3), $P\cap A=\{v_3,v_6\}$.
Since $|P\cap A|>0$ we see that $P$ is not dangerous so $P$ does not get charge by D2.
Since $|P\cap A|\ne 1$ we see that $P$ is not weak so $P$ does not get charge by D4.
Let $X$ be the set of accepted endpoints of $P$. Then $P$ gets exactly $\frac{3}{7}|X|$ by D1 and hence $\ch(P)=\frac{3}{7}|X|$.
 Then we put $D_P=(\{v_2,v_3,v_6,v_7\}\setminus N_P(X)) \cup (P\cap F)$. Since $v_1,v_8\not\in F$ by Observation~\ref{obs:endpoint-not-forces}, $|D_P|=4 - |X| + |(\{v_4,v_5\}\cup N_P(X))\cap F|$.
 By Lemma~\ref{lem:vertices}, $\sum_{v\in P}\ch(v)\le 2\times (-\frac{3}{7})-\frac{6}{7}|P\cap F|=-\frac{6}{7} - \frac{6}7|P\cap F|$.
  It follows that $|D_P| + \sch(P) \le 4 - |X| + |(\{v_4,v_5\}\cup N_P(X))\cap F| + \frac{3}{7}|X| -\frac{6}{7} - \frac{6}7|P\cap F|\le \frac{22}{7}-\frac{4}7|X|+\frac{1}{7}|(\{v_4,v_5\}\cup N_P(X))\cap F|\le \frac{22}{7}-\frac{4}7|X|+\frac{1}{7}(2+|X|)= \frac{24}{7}-\frac{3}7|X|\le\frac{3}7|P|$.

\case{4} $|P\cap A|=1$ and $P$ has an accepted endpoint.
If $P$ has both endpoints accepted we apply Case 1, so we can assume $P$ has exactly one endpoint accepted, by symmetry assume it is $v_1$.
Again, since $|P\cap A|>0$ we see that $P$ is not dangerous so $P$ does not get charge by D2.
Also, $P$ does not get charge by D4 because $P$ has an accepted endpoint so $P$ is not weak.
Hence, $\ch(P)=\frac{3}{7}$.
  We partition $P$ into three paths: $P=P_1P_2P_3P_4$, where $|P_1|=1$, $|P_2|=3$ and $|P_3|=4$.
  By Lemma~\ref{lem:vertices} $\ch(v_1)=0$. Since $v_1$ has a neighbor in $A$ so we see that $P_1$ is $0$-safe (by choosing $D_{P_1}=\emptyset$). 
  By (B3), $P\cap A=\{v_3\}$ or $P\cap A=\{v_6\}$.
  In the former case path $P_2$ is $\frac{6}7$-safe by Lemma~\ref{lem:3brick} and $P_3$ are $2$-safe by Lemma~\ref{lem:4brick}.
  Since  $0 + \frac{6}7 + 2 +\ch(P) = \frac{23}{7} < \frac{3}{7}|P|$, by Lemma~\ref{lem:safe} path $P$ is safe.  
  In the case when $P\cap A=\{v_6\}$, path $P_2$ is $\frac{8}7$-safe by Lemma~\ref{lem:3brick} and $P_3$ are $\frac{13}7$-safe by Lemma~\ref{lem:4brick}.
  Since  $0 + \frac{8}7 + \frac{13}7 +\ch(P) = \frac{24}{7} = \frac{3}{7}|P|$, by Lemma~\ref{lem:safe} path $P$ is safe.  

\case{5} $|P\cap A|=1$ and $P$ has no accepted endpoints.
By symmetry we assume $P\cap A=\{v_3\}$. We consider several subcases. In each subcase we assume that the earlier subcases do not apply.
Note that $P$ can receive charge only by D4.

\case{5.1} $(P\setminus A) \cap F \ne \emptyset$.
Since $P\cap F\ne\emptyset$, by the forcing procedure $P$ does not get charge by D4, so $\ch(P)=0$.
By Observation~\ref{obs:endpoint-not-forces}, $(P\setminus A) \cap F\subseteq \{v_2,v_4,v_5,v_6,v_7\}$.
If $v_2\in F$ we put $D_P=\{v_3,v_5,v_7\} \cup (P\cap F)$.
If $v_7\in F$ we put $D_P=\{v_2,v_3,v_5\} \cup (P\cap F)$.
Otherwise, i.e.\ when $\{v_4,v_5,v_6\}\cap F \ne \emptyset$ we put $D_P=\{v_2,v_3,v_7\} \cup (P\cap F)$.
In every of these three cases $|D_P| \le 3 + |P\cap F|$.
By Lemma~\ref{lem:vertices}, $\sum_{v\in P}\ch(v) = -\frac{3}7 - \frac{6}7|P\cap F|$.
Hence, $|D_P|+\sch(P)= |D_P| + \ch(P) + \sum_{v\in P}\ch(v) = 3 + |P\cap F| +0-\frac{3}7-\frac{6}{7}|F\cap P| = \frac{18}7 + \frac{1}{7}|P\cap F| \le \frac{24}7=\frac{3}7|P|$. 
Hence in what follows we assume $(P\setminus A) \cap F = \emptyset$.

\case{5.2} At least one of the following conditions hold:
\begin{enumerate}[(C1)]
 \item $v_3$ accepts at least two vertices, or
 \item $v_3$ accepts an endpoint $v$ of a path of order at least 4, or
 \item $v_3\in F$.
\end{enumerate}
Note that by D2, $v_3$ sends at least $\frac{6}7$ in case (C1), at least $\frac{4}{7}$ in case C2 (note that $v$ is on 1-path by (B4)) and at least $\frac{9}{7}$ in case C3.
Hence $\ch(v_3)\le-\frac{4}7$. Moreover, since a weak path accepts exactly one vertex and this vertex must be on a path of order 1, and moreover a forcing weak path has no forced vertices, D4 does not apply to $P$ and $\ch(P)=0$.
We put $D_P=\{v_2,v_3,v_5,v_7\}$.
Then, $|D_P|+\sch(P)\le 4 -\frac{4}7=\frac{24}7=\frac{3}7|P|$. 

Hence, in what follows we assume $v_3$ accepts exactly one vertex, call it $v$.

\case{5.3} $\deg(v)=1$.
Then $P$ is not weak, so D4 does not apply to $P$ and $\ch(P)=0$. Moreover, since there are no 1-vertices at distance 3, $\deg(v_1)>1$.
Since $P$ has no accepted endpoints, by Observation~\ref{obs:dangerous} $P$ has no out-endpoints, and hence $v_1$ has a neighbor in $P\setminus\{v_2\}$.
If $v_3\in N(v_1)$ we put $D_P=\{v_3,v_5,v_7\}$.
If $\{v_4,v_5,v_6\}\cap N(v_1)\ne\emptyset$ we choose an $x\in \{v_4,v_5,v_6\}\cap N(v_1)$ and we put $D_P=\{v_3,x,v_7\}$.
Otherwise, i.e.\ when $\{v_7,v_8\}\cap N(v_1)\ne\emptyset$ we choose an $x\in \{v_7,v_8\}\cap N(v_1)$ and we put $D_P=\{v_3,v_5,x\}$.
In every of these three cases $|D_P| = 3$.
By Lemma~\ref{lem:vertices}, $\sum_{v\in P}\ch(v) \le \ch(v_3) \le -\frac{3}7$.
Hence, $|D_P|+\sch(P)= 3 -\frac{3}7<\frac{3}7|P|$. 

\case{5.4} $\deg(v)>1$.
We claim that $v$ has at least two neighbors on $P$.
Otherwise, $v$ has exactly one neighbor on $P$ so before accepting $v$ path $P$ satisfies the condition $(iii)$ in the definition of a dangerous path. Moreover, then $(i)$ holds as $|P|=8$, $(ii)$ holds because we excluded (C1), $(iv)$ holds because $\deg(v)>1$, $(v)$ holds because (C2) is excluded and  (B4) holds, and $(vi)$ holds because $P$ has no out-endpoints, as we observed in Case 5.3. Hence, just before accepting $v$ path $P$ was dangerous, a contradiction. 
Hence indeed $v$ has at least two neighbors on $P$. 
Then $P$ is a weak path. 
Moreover $P\cap F = \emptyset$ by Case 5.1 and (C3).
Then if $v_5v_1\in E$ we put $D_P=\{v_1,v_3,v_7\}$.
Otherwise we put $D_P=\{v_2,v_3,v_7\}$.
We claim that $D_P\cup F$ dominates $P$.
Observe that if $v_5$ has a neighbor in $\{v_1,v_2,v_3,v_7\}$ then 
in both cases $D_P$ dominates $P$. So assume the contrary.
Note that by (B6) $v_5v_8\not\in E$. 
Since by Invariant~\ref{inv} $\deg(v_5)\ge 3$ we infer that $v_5$ has a neighbor outside $P$.
Since $P\cap F = \emptyset$ we know from the procedure which builds the set of forced vertices that $v_5$ has a forced neighbor $x$ outside $P$.
It follows that $D_P\cup F$ contains $\{v_2,v_3,x,v_7\}$ and it dominates $P$ as required.
In both of our choices of $D_P$ we have $|D_P|=3$.
Since $|P\cap A| = 1$ we have $\sum_{v\in P}\ch(v) \le -\frac{3}7$.
Since D1-D3 do not send charge to $P$, we have $\ch(P)\le \frac{6}{7}$.
Hence, $|D_P|+\sch(P) \le 3 + \frac{6}{7} -\frac{3}7 = \frac{24}7 = \frac{3}7|P|$ and we see that $P$ is safe. 
 \end{proof}

\begin{lemma}
\label{lem:path11}
 Every path $P$ of order 11 is safe.
\end{lemma}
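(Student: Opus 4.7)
The plan is to invoke Lemma~\ref{lem:safe}: partition $P=v_1\ldots v_{11}$ into bricks of orders in $\{3,4,7,8\}$ such that the sum of their $\alpha$-safeness bounds plus $\ch(P)$ is at most $\tfrac{3}{7}\cdot 11=\tfrac{33}{7}$.

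First I would bound $\ch(P)$. Since $|P|=11\notin\{1,2,5,8\}$, the accepting procedure never marks either endpoint of $P$, so neither endpoint is ever accepted and Rule D1 sends no charge to $P$. Rule D2 does not apply ($P$ is neither dangerous, as that requires $|P|=8$, nor rejected, since by Observation~\ref{obs:dangerous} rejected paths have order $1$). Rule D4 does not apply because $P$ is not weak ($|P|\neq 8$). Finally, by Lemma~\ref{lem:dangling}, $P$ has at most one dangling neighbor, so Rule D3 contributes at most $\tfrac{1}{7}$. Hence $\ch(P)\le\tfrac{1}{7}$ and it suffices to find a decomposition with $\sum_i\alpha_i\le\tfrac{32}{7}$.

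Next I would case split on $A\cap P$. By (B3) and because $P$ is a $2$-path, $A\cap P\subseteq\{v_3,v_6,v_9\}$, and the reflection $v_i\mapsto v_{12-i}$ reduces the eight possibilities to six. In the case $A\cap P=\emptyset$ the split $(v_1v_2v_3)(v_4\ldots v_{11})$ into a $3$-brick and an $8$-brick, both without acceptors, gives $\tfrac{8}{7}+\tfrac{22}{7}=\tfrac{30}{7}\le\tfrac{32}{7}$ via Lemmas~\ref{lem:3brick} and~\ref{lem:8brick}. For each non-empty case I would align the bricks so that every acceptor sits at a central position of its brick (position $3$ of a $4$-brick, the middle of a $3$-brick, or position $3$ or $6$ of a $7$-brick) and apply the refined, $|F|$-dependent versions of the safeness bounds that are visible in the proofs of Lemmas~\ref{lem:3brick}--\ref{lem:8brick}. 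Typical choices are $(v_1\ldots v_4)(v_5\ldots v_{11})$ for $A\cap P=\{v_3\}$ and $(v_1\ldots v_4)(v_5v_6v_7)(v_8\ldots v_{11})$ for $\{v_6\}$, $\{v_3,v_9\}$, $\{v_3,v_6\}$, or $\{v_3,v_6,v_9\}$.

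The hard part will be the tight subcases in which the generic bounds saturate the budget. For example, in the $\{v_6\}$ case the split $(v_1\ldots v_4)(v_5v_6v_7)(v_8\ldots v_{11})$ with $F=\emptyset$ gives exactly $2+\tfrac{4}{7}+2=\tfrac{32}{7}$, so an additional $\tfrac{1}{7}$ from a dangling neighbor --- or a forced vertex at $v_5$ or $v_7$, which raises the $3$-brick bound --- exceeds $\tfrac{33}{7}$. I expect to resolve these by (i)~switching the decomposition so that the offending $F$-vertex lies inside a $4$-brick, where it lowers rather than raises the safeness bound (a forced $v_5$ is better absorbed by the $4$-brick $(v_4v_5v_6v_7)$ in the decomposition $(v_1v_2v_3)(v_4\ldots v_7)(v_8\ldots v_{11})$), and (ii)~using the structural facts that by Lemma~\ref{lem:dangling} and \bref{B:dang} any dangling neighbor of $P$ attaches at $v_6$ with both halves of $P$ equal to $5$-paths, and that \bref{B:ord-5} restricts the chords at $v_6$. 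If some tight configuration still resists a brick argument, I would fall back on a direct choice of $D_P$, typically $\{v_3,v_6,v_9\}\cup(P\cap F)$ augmented by two vertices dominating $v_1$ and $v_{11}$, and bound $|D_P|+\sum_{v\in P}\ch(v)+\ch(P)$ directly using the per-vertex charge estimates from Lemma~\ref{lem:vertices}.
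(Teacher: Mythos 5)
Your setup is sound and, for most of the case analysis, coincides with the paper: the bound $\ch(P)\le\frac17$ is obtained exactly as in the paper's proof, the case $|A\cap P|\ge 2$ via the $4{+}3{+}4$ split with the refined bound of Lemma~\ref{lem:4brick} is the paper's Case~1, and the case $A\cap P=\emptyset$ via $3{+}8$ is its Case~4. The genuine gap is in the case $|A\cap P|=1$ when forced vertices are present. Take the concrete configuration $A\cap P=\{v_3\}$, $v_4\in F$, and a dangling neighbor of $P$ (by Lemma~\ref{lem:dangling} it attaches at $v_6$, so $\ch(P)=\frac17$). Any decomposition into bricks of orders in $\{3,4,7,8\}$ that places the acceptor at a position permitted by Lemmas~\ref{lem:3brick}--\ref{lem:8brick} must put $v_3$ inside a first brick of order $4$ or $7$, hence together with $v_4$; that brick needs three vertices of $D_P$ (it must contain $v_3$ and $v_4$ and still dominate $v_1$), and even the proof-internal $|F|$-dependent bounds give at least $\frac{12}{7}$ for it, since the forced $v_4$ sits at a brick endpoint (or at the boundary of the internal $4{+}3$ split used in the proof of Lemma~\ref{lem:7brick}); the remaining seven vertices cost $3$, so every admissible decomposition totals at least $\frac{12}{7}+3+\frac17=\frac{34}{7}>\frac{33}{7}$. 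Your fallback $D_P=\{v_3,v_6,v_9\}\cup(P\cap F)$ plus two endpoint-dominators fails on the same configuration for a quantitative reason: each forced vertex outside the base set contributes a net $+\frac17$ ($+1$ to $|D_P|$ against $-\frac67$ of charge by Lemma~\ref{lem:vertices}), and with the dangling $\frac17$ there is no slack left, giving again $\frac{34}{7}$.

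What is missing is precisely the idea the paper uses in its Cases~2 and~3, where it abandons bricks altogether for $|A\cap P|=1$: if $P\cap F\subseteq\{v_2,v_3,v_6,v_9,v_{10}\}$ one takes exactly that five-element set as $D_P$ (tight, $5-\frac37+\frac17=\frac{33}{7}$); otherwise a forced vertex lies in $\{v_4,v_5,v_7,v_8\}$, and $D_P$ is rebuilt from only \emph{four} non-forced base vertices chosen around it, e.g.\ $\{v_2,v_3,v_7,v_{10}\}\cup(P\cap F)$ when $A\cap P=\{v_3\}$ and $\{v_4,v_5\}\cap F\ne\emptyset$, so that the forced vertex replaces a fifth base vertex and one gains a full unit of slack instead of paying $\frac17$ per forced vertex. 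Your strategy of re-routing the offending $F$-vertex into the interior of a $4$-brick is a step in this direction and does handle, say, a forced $v_5$ with acceptor $v_6$, but it cannot be made to work for the configuration above within your stated brick set (one would need an ad hoc brick such as $(v_1v_2v_3)$ with the acceptor at its endpoint, or simply the paper's direct construction), so as written the plan does not close the hardest subcase.
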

\begin{proof}
Let $P=v_1\ldots v_{11}$.
By the accepting procedure $P$ does not have accepted out-endpoints, so $P$ does not get charge by D1.
By Lemma~\ref{lem:dangling} $P$ gets at most $\frac{1}{7}$ charge by D3, so $\ch(P)\le\frac{1}{7}$.

\case{1}
$|P\cap A|\ge 2$.
Then we partition $P=P_1P_2P_3$ where $|P_1|=|P_3|=4$ and $|P_2|=3$.
Note that by (B3), for $i=1,2,3$ we have $|P_i \cap A|\le 1$.

If $P_1\cap A \ne \emptyset$ and $P_3\cap A \ne \emptyset$ then by Lemma~\ref{lem:4brick} both $P_1$ and $P_3$ are $\frac{12}{7}$-safe (note that the endpoints of $P$ are not in $F$ by Observation~\ref{obs:endpoint-not-forces}) and by Lemma~\ref{lem:3brick}  path $P_2$ is $\frac{8}{7}$-safe.
Since $\ch(P)+2\times \frac{12}7+\frac{8}7\le \frac{33}{7} = \frac{3}7|P|$, so by Lemma~\ref{lem:safe} path $P$ is safe.

By symmetry we can assume $P_1\cap A \ne \emptyset$ and $P_2\cap A \ne \emptyset$.
Then by Lemma~\ref{lem:4brick} $P_1$ is $\frac{12}{7}$-safe and $P_3$ is $2$-safe, while by Lemma~\ref{lem:3brick}  path $P_2$ is $\frac{6}{7}$-safe.
Since $\ch(P)+\frac{12}7+2+\frac{6}7\le \frac{33}{7} = \frac{3}7|P|$, so by Lemma~\ref{lem:safe} path $P$ is safe.

\case{2}
$|P\cap A| = 1$ and $P \cap F \subseteq \{v_2,v_3,v_6,v_9,v_{10}\}$.
Note that $\sum_{v\in P}\ch(v) = -\frac{3}7-\frac{6}{7}|F\cap P|\le -\frac{3}7$.
Hence $\sch(P)\le\frac{1}7-\frac{3}7=-\frac{2}7$.
Then we put $D_P=\{v_2,v_3,v_6,v_9,v_{10}\}$.
(Observe that $A\cap P \subseteq D_P$ because of (B3).)
  It follows that $|D_P| + \sch(P) \le 5 -\frac{2}{7}= \frac{33}7= \frac{3}7|P|$.

\case{3} 
$|P\cap A| = 1$ and $P \cap F \not\subseteq \{v_2,v_3,v_6,v_9,v_{10}\}$.
By Observation~\ref{obs:endpoint-not-forces} we have $\{v_4,v_5,v_7,v_8\}\cap F \ne \emptyset$.
By (B3) $P\cap A$ is equal to $\{v_3\}$, $\{v_6\}$ or $\{v_9\}$.
First assume $P\cap A=\{v_3\}$ (the case  $P\cap A=\{v_9\}$ is symmetric).
If $\{v_4,v_5\}\cap F \ne \emptyset$ we put $D_P=\{v_2,v_3,v_7,v_{10}\} \cup (P\cap F)$.
Otherwise, i.e.\ when $\{v_7,v_8\}\cap F \ne \emptyset$ we put $D_P=\{v_2,v_3,v_6,v_{10}\} \cup (P\cap F)$.
Now assume $P\cap A=\{v_6\}$. By symmetry we can assume $\{v_4,v_5\}\cap F \ne \emptyset$.
Then we put $D_P=\{v_2,v_6,v_8,v_{10}\} \cup (P\cap F)$.
In every of these three cases $|D_P| \le 4 + \min\{|P\cap F|,5\}$ (recall that $v_1,v_{11}\not\in F$ by Observation~\ref{obs:endpoint-not-forces}).
Hence, $|D_P|+\sch(P)= |D_P| + \ch(P) + \sum_{v\in P}\ch(v) < 4 + \min\{|P\cap F|,5\} +\frac{1}7-\frac{3}7-\frac{6}{7}|F\cap P| \le \frac{26}7 + \frac{1}{7}\min\{|P\cap F|,5\} \le \frac{31}7<\frac{3}7|P|$. 

\case{4}
$|P\cap A| = 0$.
Then we partition $P=P_1P_2$ where $|P_1|=8$ and $|P_2|=3$.

By Lemma~\ref{lem:8brick} $P_1$ is $\frac{22}{7}$-safe and by Lemma~\ref{lem:3brick}  path $P_2$ is $\frac{8}{7}$-safe.
Since $\ch(P)+\frac{22}7+\frac{8}7\le \frac{31}{7} < \frac{3}7|P|$, so by Lemma~\ref{lem:safe} path $P$ is safe.
 \end{proof}

\begin{lemma}
\label{lem:long-2-paths}
 Every 2-path $P$ of order at least 14 is safe.
\end{lemma}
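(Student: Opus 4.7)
The plan is to partition $P$ into bricks and apply Lemma~\ref{lem:safe}, exactly as in Lemma~\ref{lem:path11}. Write $|P|=3k+2$ with $k\ge 4$; by (B3), every acceptor in $P$ lies at some $v_{3j}$ with $1\le j\le k$. I would first check that rules D1, D2, and D4 contribute nothing to $\ch(P)$: the accepting procedure only marks endpoints of paths in $S_1$ or of order in $\{2,5,8\}$, while both dangerous and weak paths have order $8$. Only D3 can fire, and by Lemma~\ref{lem:dangling} it sends at most $\frac{1}{7}$ to $P$, and only when $|P|=17$; thus $\ch(P)\le\frac{1}{7}[|P|=17]$.

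If $|P\cap A|=0$, I would partition $P$ into $(k-2)$ consecutive 3-bricks followed by one 8-brick (the 8-brick's hypothesis $P\cap A=\emptyset$ is vacuous). Lemmas~\ref{lem:3brick} and \ref{lem:8brick} yield total brick-safeness $(k-2)\cdot\frac{8}{7}+\frac{22}{7}=\frac{8k+6}{7}$, so after adding $\ch(P)$ we remain below $\frac{9k+6}{7}=\frac{3}{7}|P|$ for every $k\ge 1$, and Lemma~\ref{lem:safe} finishes this case.

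If $|P\cap A|\ge 1$, I would use the partition $P=Q_1 M_1 M_2 \cdots M_{k-2} Q_2$, where $Q_1=v_1v_2v_3v_4$, $M_i=v_{3i+2}v_{3i+3}v_{3i+4}$, and $Q_2=v_{3k-1}v_{3k}v_{3k+1}v_{3k+2}$. A simple index-mod-$3$ calculation shows that each brick contains exactly one vertex of the form $v_{3j}$: namely $v_3$ in $Q_1$ (at brick-position $3$), $v_{3(i+1)}$ in the middle of $M_i$, and $v_{3k}$ in $Q_2$ (at brick-position $2$, equivalently position $3$ after reversal of the brick). Hence every acceptor of $P$ sits exactly where Lemma~\ref{lem:3brick} or Lemma~\ref{lem:4brick} requires it. By Observation~\ref{obs:endpoint-not-forces} the endpoints $v_1,v_{3k+2}$ of $P$ are not in $F$, so each of $Q_1,Q_2$ is at most $\frac{12}{7}$-safe when it contains an acceptor and $2$-safe otherwise, while each $M_i$ is $\frac{6}{7}$- or $\frac{8}{7}$-safe depending on whether it contains an acceptor. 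Writing $m=|P\cap A|$, every acceptor lowers the total brick-safeness by $\frac{2}{7}$ compared with the no-acceptor bound, so the total is at most $\frac{8k+12-2m}{7}$; combined with $\ch(P)\le\frac{1}{7}[|P|=17]$, this stays within $\frac{9k+6}{7}$ whenever $k\ge 6-2m+[|P|=17]$, which holds for all $k\ge 4$ and $m\ge 1$.

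The main hurdle is arithmetic rather than structural: the $2$-safe bound for a 4-brick without an acceptor is essentially tight, so the basic partition alone does not close the gap for $k\in\{4,5\}$. What saves the day is that matching acceptor positions modulo $3$ (possibly after reversing $Q_2$) guarantees that the $\frac{2}{7}$ per-acceptor saving is actually available in Case~2, precisely when it is needed.
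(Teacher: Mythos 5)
Your proposal is correct, but it takes a genuinely different decomposition from the paper. The paper's proof is uniform: it partitions $P$ into a 7-brick, $\frac{|P|-14}{3}$ middle 3-bricks, and a second 7-brick, and since the 7-brick Lemma~\ref{lem:7brick} gives $3$-safeness \emph{regardless} of whether the brick contains acceptors (it already allows $A$ at the two positions divisible by 3, after reversing the last brick), no case distinction on $|P\cap A|$ is needed; the bound $\frac{1}{7}[|P|=17]+6+\frac{|P|-14}{3}\cdot\frac{8}{7}\le\frac{3}{7}|P|$ then closes in one line via Lemma~\ref{lem:safe}. You instead avoid the 7-brick Lemma and split into cases: with $A\cap P=\emptyset$ you use 3-bricks plus an 8-brick (Lemma~\ref{lem:8brick}), and with $A\cap P\ne\emptyset$ you use a 4-brick, 3-bricks, 4-brick partition, cashing in the $\frac{2}{7}$ improvement per acceptor from the ``moreover'' clauses of Lemmas~\ref{lem:3brick} and~\ref{lem:4brick} together with Observation~\ref{obs:endpoint-not-forces} (which is exactly what makes the end 4-bricks $\frac{12}{7}$-safe rather than $\frac{13}{7}$-safe, and is genuinely needed for $k\in\{4,5\}$). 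Your preliminary computation of $\ch(P)\le\frac{1}{7}[|P|=17]$ (no charge via D1, D2, D4; Lemma~\ref{lem:dangling} for D3) coincides with the paper's, your use of (B3) to pin acceptors to positions $v_{3j}$ and the brick-reversal for $Q_2$ are legitimate, and the arithmetic $k\ge 6-2m+[|P|=17]$ checks out for all $k\ge 4$, $m\ge 1$. The trade-off: the paper's argument is shorter and needs no bookkeeping of acceptor counts, while yours gets by with only the 3-, 4- and 8-brick lemmas at the price of a case analysis.
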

\begin{proof}
By the accepting procedure $P$ does not have accepted out-endpoints, so $P$ does not get charge by D1.
By Lemma~\ref{lem:dangling} $P$ gets at most $\frac{1}{7}[|P|=17]$ charge by D3, so $\ch(P)\le\frac{1}{7}[|P|=17]$.

 Then we partition $P=P_0\ldots P_{(|P|-14)/3+1}$ where both $P_0$ and $P_{(|P|-14)/3+1}$ are of order $7$ and for every $i=1,\ldots,\frac{|P|-14}{3}$ path $P_i$ is of order 3.
 By Lemma~\ref{lem:7brick}  $P_0$ and $P_{(|P|-14)/3+1}$ are $3$-safe and by Lemma~\ref{lem:3brick} for $i=1,\ldots,\frac{|P|-14}{3}$ path $P_i$ is $\frac{8}{7}$-safe.
 Since $\frac{1}7[|P|=17]+2\times 3+\frac{|P|-14}{3}\times \frac{8}7=\frac{1}7[|P|=17]+\frac{8|P|+14}{21}=\frac{3}{7}|P|+\frac{14+3[|P|=17]-|P|}{21}\le \frac{3}{7}|P|$ so by Lemma~\ref{lem:safe} path $P$ is safe.
 \end{proof}

\smallskip
\noindent {\bf Acknowledgement.} I thank Micha\l\  D\c{e}bski and Marcin Mucha for helpful discussions. I am also very grateful to anonymous reviewers for careful reading and many helpful remarks.

\bibliographystyle{abbrv}

\end{document}